\pdfoutput=1 
\documentclass[11pt,letterpaper]{article}
\newcommand{\ifarxiv}[2]{#2}

\usepackage[utf8]{inputenc}
\ifarxiv{
\usepackage[margin=0.9in]{geometry}
}{
\usepackage[margin=1in]{geometry}
}
\usepackage{palatino}
\usepackage{microtype}

\usepackage{helvet}
\usepackage{amsmath}
\usepackage{amsfonts}
\usepackage{amssymb}
\usepackage{amsthm}
\usepackage{comment}
\usepackage{mathtools}
\usepackage{mathrsfs}
\usepackage{bm}
\usepackage{caption}
\usepackage{subcaption}
\usepackage{sectsty}
\allsectionsfont{\sffamily}

\usepackage[dvipsnames]{xcolor}
\usepackage{tikz}
\usepackage{mathpazo}
\usepackage{verbatim}
\usepackage{tabularx}
\usepackage{tcolorbox}
\usepackage{float}

\usepackage[linesnumbered,ruled,lined]{algorithm2e}
\usepackage{algpseudocode}

\SetKwFor{ForPar}{for}{in parallel do}{end}
\SetKwFor{ForAllPar}{forall}{in parallel do}{end}

\usepackage{hyperref}
\hypersetup{colorlinks=true,citecolor=blue, linkcolor=blue, urlcolor=blue}

\usepackage[capitalize, nameinlink]{cleveref}

\usepackage{todonotes}
\def\lis{{\mathcal{L}}}
\def\xALG{{X^{\textbf{ALG}}}}
\def\tPA{{T^{\mathrm{PA}}}}
\def\tGD{{T^{\mathrm{GD}}}}
\def\tFD{{T^{\mathrm{FD}}}}
\def\dTV{{d_{\mathrm{TV}}}}
\def\xGD{{X_{\tGD}}}
\def\G{{\mathcal{G}}}
\def\p{{\mathbf{p}}}
\def\randseed{{\mathcal{R}}}
\def\GDResult{{Y}}
\def\R{{\mathfrak{R}}}
\def\pred{{\mathrm{pred}}}
\def\Pred{{\mathrm{Pred}}}
\def\init{{X'}}
\newcommand{\NC}{$\mathsf{NC}$}
\newcommand{\RNC}{$\mathsf{RNC}$}

\SetKwFor{ForAllPar}{forall}{in parallel do}{end}

\usepackage{xifthen}
\newcommand{\e}{\mathrm{e}}
\newcommand{\norm}[1]{\left\Vert#1\right\Vert}

\newcommand{\tuple}[1]{\left(#1\right)} 
\newcommand{\inner}[2]{\left\langle #1,#2\right\rangle}
\newcommand{\tp}{\tuple}
\newcommand{\abs}[1]{\left\vert#1\right\vert}
\newcommand{\ctp}[1]{\left\lceil#1\right\rceil}

\newcommand{\DTV}[2]{\dTV\left({#1},{#2}\right)}
\newcommand{\DKL}[2]{d_{\-{KL}}\left({#1},{#2}\right)}
\newcommand{\parGD}{\mathsf{parallel\text{-}RC}}
\newcommand{\ParGD}[4]{\parGD\left({#1},{#2},{#3},{#4}\right)}

\def\*#1{\boldsymbol{#1}} 
\def\+#1{\mathcal{#1}} 
\def\-#1{\mathrm{#1}} 
\def\^#1{\mathscr{#1}} 

\def\oPr{\*{\-{Pr}}}
\renewcommand{\Pr}[2][]{ \ifthenelse{\isempty{#1}}
  {\oPr\left[#2\right]}
  {\oPr_{#1}\left[#2\right]} } 

\def\oE{\mathbb{E}}
\newcommand{\E}[2][]{ \ifthenelse{\isempty{#1}}
  {\oE\left[#2\right]}
  {\oE_{#1}\left[#2\right]} }

\def\oVar{\mathrm{Var}}
\newcommand{\Var}[2][]{ \ifthenelse{\isempty{#1}}
  {\oVar\left[#2\right]}
  {\oVar_{#1}\left[#2\right]} }

\def\oEnt{\mathrm{Ent}}
\newcommand{\Ent}[2][]{ \ifthenelse{\isempty{#1}}
  {\oEnt\left[#2\right]}
  {\oEnt_{#1}\left[#2\right]} }


\newtheorem{theorem}{Theorem}[section]

\newtheorem*{claim*}{Claim}

\newtheorem{lemma}[theorem]{Lemma}
\newtheorem{proposition}[theorem]{Proposition}

\theoremstyle{definition}
\newtheorem{condition}{Condition}

\newtheorem{remark}[theorem]{Remark}
\newtheorem*{remark*}{Remark}

\newtheorem*{case*}{Case}

\title{Efficient Parallel Ising Samplers via Localization Schemes}
\date{}
\author{Xiaoyu Chen ~ Hongyang Liu ~ Yitong Yin ~ Xinyuan Zhang
}

\author{Xiaoyu Chen~\thanks{State Key Laboratory for Novel Software Technology, New Cornerstone Science Laboratory, Nanjing University, 163 Xianlin Avenue, Nanjing, Jiangsu Province, China. E-mails: \url{chenxiaoyu233@smail.nju.edu.cn}, \url{liuhongyang@smail.nju.edu.cn}, \url{yinyt@nju.edu.cn}, \url{zhangxy@smail.nju.edu.cn}} 
\and Hongyang Liu\footnotemark[1]
\and Yitong Yin\footnotemark[1]
\and Xinyuan Zhang\footnotemark[1]
}

\begin{document}
\maketitle
\begin{abstract}
We introduce efficient parallel algorithms for sampling from the Gibbs distribution and estimating the partition function of Ising models. 
These algorithms achieve parallel efficiency, with polylogarithmic depth and polynomial total work,
and are applicable to Ising models in the following regimes:
(1)~Ferromagnetic Ising models with external fields; 
(2)~Ising models with interaction matrix $J$ of operator norm $\|J\|_2<1$.

Our parallel Gibbs sampling approaches are based on localization schemes, 
which have proven highly effective in establishing rapid mixing of Gibbs sampling. 
In this work, we employ two such localization schemes to obtain efficient parallel Ising samplers: 
the \emph{field dynamics} induced by \emph{negative-field localization}, and \emph{restricted Gaussian dynamics} induced by \emph{stochastic localization}.
This shows that localization schemes are powerful tools, not only for achieving rapid mixing but also for the efficient parallelization of Gibbs sampling.
\end{abstract}

\setcounter{page}{0} 
\thispagestyle{empty}
\newpage
\section{Introduction}


The Ising model, originally introduced by Lenz and \cite{ising1924beitrag} in statistical physics to study the criticality of ferromagnetism, 
has since found numerous applications and has been extensively studied across various fields, 
including probability theory, statistical learning, and computer science.

Let $G = (V, E)$ be a connected undirected graph with $n=|V|$ vertices and $m=|E|$ edges. 
Let $\*\beta \in (0, +\infty)^E$ represent the edge activities, and $\*\lambda \in [0, 1]^V$ represent the external fields.
The \emph{Gibbs distribution} $\mu^{\-{Ising}}_{\*\beta, \*\lambda}$ of the {Ising model} on the graph $G$ with parameters $\*\beta$ and $\*\lambda$ is supported on $2^V$ and is given by:
\begin{align*}
    \forall S \subseteq V, \quad \mu^{\-{Ising}}_{\*\beta, \*\lambda}(S) := \frac{1}{Z^{\-{Ising}}_{\*\beta, \*\lambda}} \prod_{e \in m(S)} \beta_e \prod_{v \in S} \lambda_v,
\end{align*}
where $m(S) := \{e = (u, v) \in E \mid u, v \in S \text{ or } u, v \not\in S\}$ is the set of ``monochromatic'' edges,  
i.e., edges where both endpoints are either in $S$ or both outside of $S$. 
The normalizing factor, called the \emph{partition function}, is given by: $Z^{\-{Ising}}_{\*\beta, \*\lambda} := \sum_{S\subseteq E} \prod_{e \in m(S)} \beta_e \prod_{v \in S} \lambda_v$.

The Ising model is called \emph{ferromagnetic} if $\*\beta \in (1, +\infty)^E$, meaning that pairwise interactions favor monochromatic edges, 
and the model is called \emph{anti-ferromagnetic} if $\*\beta \in (0,1)^E$.

A central problem in the study of Ising models is sampling from the Gibbs distribution. This task is essential not only for estimating the partition function but also for various other inference and learning problems associated with the Ising model.

\subsection{Parallel sampler for ferromagnetic Ising model}

In a seminal work by \cite{jerrum1993polynomial}, it was shown that for ferromagnetic Ising models, approximately sampling from the Gibbs distribution and approximately computing the partition function are both tractable in polynomial time. 
This result, along with celebrated breakthroughs in polynomial-time approximations of permanents \cite{jerrum1989approximating,jerrum2004polynomial} and volumes of convex bodies \cite{dyer1991random}, 
remarkably showcased the power of random sampling in approximately solving \#\textsf{P}-hard inference problems on polynomial-time Turing machines.
 
%
%

%
Motivated by emerging applications in large-scale data and large models, parallel Gibbs sampling has recently drawn considerable attention~\cite{Liu2025Parallelize,AnariGR24,anari2023parallel,anari2023quadratic,liu2022simple,anari2021sampling,feng2019dynamic,feng2021distributed,fischer2018simple,feng2017sampling}.
Despite these advances, the tractability of the ferromagnetic Ising model through parallel algorithms remains largely unresolved, 
in contrast to the successes of classical polynomial-time sequential algorithms.

%
%
%

In the theory of computing, the parallel complexity of the ferromagnetic Ising model is of foundational significance.
In a seminal work, \cite{mulmuley1987matching} asked whether an \RNC{} algorithm (with polylogarithmic depth on polynomial processors) exists for sampling bipartite perfect matchings, which would imply an \RNC{} approximation algorithm for the permanent of Boolean matrices.
%
%
To this day, this remains a major open problem. \cite{teng1995independent} conjectured that no \RNC{} algorithm exists for this task, 
making it a rare example of a polynomial-time tractable problem suspected to be intrinsically sequential, yet not known to be \textsf{P}-complete.
The challenge of parallelizing the ferromagnetic Ising sampler is closely tied to the problem of sampling matchings, as both problems were resolved sequentially using the same canonical path argument~\cite{jerrum1989approximating,jerrum1993polynomial}.

In this work, we present an efficient parallel sampler in the \RNC{} class 
for the general ferromagnetic Ising model with nonzero external fields.
To the best of our knowledge, this is the first \RNC{} sampler for Ising model beyond the critical phase-transition threshold.

\begin{theorem}[ferromagnetic Ising sampler] 
\label{theorem:main-ising}
Let $\delta \in (0,1)$ be a constant.
There is a parallel algorithm that,
given $\epsilon \in (0,1)$ and an Ising model on a graph $G = (V,E)$ with parameters $\*\beta \in [1 + \delta, +\infty)^E$ and $\*\lambda \in [0, 1 - \delta]^V$,
outputs a sample from the Gibbs distribution $\mu^{-{Ising}}_{\*\beta, \*\lambda}$ within total variation distance $\epsilon$
in $(\epsilon^{-\frac{1}{\log n}} \log n)^{O_\delta(1)}$ parallel time using $O_\delta(m^2 \log\frac{n}{\epsilon})$ processors.
\end{theorem}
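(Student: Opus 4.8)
The plan is to realize the ferromagnetic Ising sampler as a parallel implementation of the \emph{field dynamics} induced by negative-field localization, following the localization-scheme philosophy stated in the abstract. Recall that field dynamics transforms a sample from the Gibbs distribution with a weakened external field into a sample from the true Gibbs distribution: starting from $\mu^{\-{Ising}}_{\*\beta,\*\lambda}$, one picks an auxiliary field parameter $\theta\in(0,1)$, first resamples which vertices are ``pinned'' according to a $\theta$-biased coin (the negative-field localization step), and then resamples the Ising configuration on the unpinned vertices conditioned on the pins, where the unpinned model has field strengthened back toward $\*\lambda$. The key structural fact we will exploit is that, for ferromagnetic $\*\beta\in[1+\delta,\infty)^E$ and $\*\lambda\in[0,1-\delta]^V$, this dynamics contracts at a rate depending only on $\delta$, so that $O_\delta(\log n)$ rounds of field dynamics bring the total variation distance from stationarity below any target. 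That rapid-mixing statement is the localization-scheme input; what is new here is that \emph{each round is parallelizable}.

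The main work, and the expected main obstacle, is implementing a single round of field dynamics in polylogarithmic depth. After the $\theta$-coin step, one must sample an Ising configuration on the subgraph induced by the unpinned vertices, conditioned on the pinned spins — but this conditional model is \emph{still a ferromagnetic Ising model}, so we have not obviously made progress. The resolution is that a single field-dynamics round only needs the conditional sample to be \emph{coupled close} to the fixed-point, not exact: I would run the \emph{restricted Gaussian dynamics} / a parallel Glauber-type update for a bounded number of sub-steps on the conditional model, using the fact that after the field weakening the conditional Ising model is in a strong spatial-mixing regime (operator-norm or correlation-decay bound uniform in $\delta$), so that a parallel block-update scheme — partition $V$ into $O_\delta(1)$-colored blocks via a low-depth coloring, update each color class in parallel using local brute-force enumeration over $O_\delta(\log n)$-size neighborhoods — converges in $O_\delta(\log n)$ parallel rounds. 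Establishing that this parallel inner sampler has the right bias, and bounding the accumulated error across the $O_\delta(\log n)$ outer rounds and $O_\delta(\log n)$ inner rounds so that the total is $\epsilon$, is the delicate accounting; this is where the unusual running time $(\epsilon^{-1/\log n}\log n)^{O_\delta(1)}$ comes from — the $\epsilon^{-1/\log n}$ factor reflects that error is driven down geometrically per round over $\Theta(\log n)$ rounds, so $\epsilon$ enters only through its $(\log n)$-th root.

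Concretely I would proceed in the following steps. (i)~State the field-dynamics chain with parameter $\theta=\theta(\delta)$ and invoke the contraction estimate to reduce to $k=O_\delta(\log\frac{n}{\epsilon}/\log n)$ outer rounds, tracking a per-round TV error budget $\epsilon_0$ with $\epsilon_0^{k}\lesssim\epsilon$, i.e. $\epsilon_0 \lesssim \epsilon^{1/\log n}$ up to $\delta$-dependent factors. (ii)~For the negative-field localization step, note it is embarrassingly parallel: independent coins, depth $O(1)$, work $O(n)$. (iii)~For the conditional-resampling step, show the conditioned ferromagnetic model on unpinned vertices satisfies a $\delta$-uniform spectral-independence / spatial-mixing bound, hence a parallel Glauber / block dynamics mixes in $O_\delta(\log\frac{n}{\epsilon_0})$ parallel steps; implement each parallel step with a low-depth proper coloring of $G$ (e.g.\ via the standard $\mathsf{NC}$ coloring on bounded-degree-reduced instances, or by exploiting that only local information on radius-$O_\delta(\log n)$ balls is needed) so that each round has depth $(\log n)^{O_\delta(1)}$ and work $O_\delta(m^2\log\frac{n}{\epsilon})$. (iv)~Couple the parallel inner sampler to the exact field-dynamics update and bound the discrepancy by $\epsilon_0$; multiply out the depths of the $k$ outer and $O_\delta(\log\frac{n}{\epsilon_0})$ inner rounds to get total depth $(\epsilon^{-1/\log n}\log n)^{O_\delta(1)}$, and sum the work. (v)~Initialize the outermost round from any easy-to-sample distribution (e.g.\ product measure, corresponding to the fully localized/pinned endpoint of the scheme), which the contraction then corrects. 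The genuinely hard part is step~(iii): proving that the naive parallel block update actually tracks the stationary law of the conditional ferromagnetic model — ferromagnetic block dynamics does not automatically parallelize, and one must lean on the weakened-field regime to get the uniform mixing needed.
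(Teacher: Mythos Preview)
Your high-level scaffold matches the paper's --- run field dynamics, parallelize each denoising step --- but two choices diverge in ways that matter. First, the paper does not run field dynamics on the Ising spin system; it passes to the \emph{random cluster} representation via the Edwards--Sokal coupling and runs field dynamics there. On the RC model, negative-field localization tilts the edge probabilities $\*p\mapsto\*p^\star=\frac{\*p}{\*p+\theta(1-\*p)}$ toward $1$, so the inner instance has every edge-marginal equal to $1-O_\delta(1/\log n)$ \emph{uniformly in $\*\beta$}. By contrast, tilting the external field $\*\lambda$ on the spin model leaves $\*\beta\in[1+\delta,\infty)^E$ untouched, and your inner instance is still a ferromagnetic Ising model at arbitrarily low temperature; your assertion that it then enjoys ``$\delta$-uniform spectral-independence / spatial-mixing'' is precisely where the argument would break, and is the reason the paper takes the RC detour in the first place.

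Second, the inner sampler in the paper is not block dynamics via coloring or brute force on $O_\delta(\log n)$-balls --- neither is available here, since no degree bound on $G$ is assumed, so there is no $O_\delta(1)$-coloring and neighborhoods need not be small. Instead the paper uses the \emph{parallel simulation of Glauber dynamics by universal coupling}: draw all $T^{\mathrm{GD}}$ update sites and seeds in advance, then run a fixed-point iteration that recomputes the Glauber transcript in $T^{\mathrm{PA}}$ parallel rounds. Prior analyses of this scheme require a Dobrushin-type influence bound, which fails even for the tilted low-temperature RC model; the paper's main technical contribution is a relaxed ``coupling-with-the-stationary'' criterion (their Condition~1) that is verifiable exactly because of the edge-marginal lower bound above. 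The missing ingredients in your proposal are thus (a) the Edwards--Sokal passage to the random cluster model, and (b) the universal-coupling parallel Glauber simulation together with the new criterion that replaces Dobrushin.
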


%


One can view this result as a parallel counterpart to \cite{jerrum1993polynomial}.
Through a standard reduction via non-adaptive annealing, 
this \RNC{} Ising sampler can be turned into an \RNC{} algorithm for approximating the Ising partition function $Z^{\-{Ising}}_{\*\beta, \*\lambda}$. 
Specifically, applying the work-efficient parallel annealing algorithm recently developed in \cite{liu2024work}, 
the parallel Ising sampler stated in \Cref{theorem:main-ising} can be transformed to a randomized parallel algorithm which returns an $(1\pm\epsilon)$-approximation of the Ising partition function in $(\epsilon^{-\frac{1}{\log n}} \cdot \log n)^{O_\delta(1)}$ parallel time on $\tilde{O}_\delta(m^2n/\epsilon^2)$ machines.

The parallel Ising sampler in \Cref{theorem:main-ising} is implied by a parallel sampler for the \emph{random cluster model} (see \Cref{sec:def-random-cluster} for  definition).
The correspondence between the Ising model and the random cluster model is due to the well-known \emph{Edwards-Sokal coupling}~\cite{edwards1988gen} (see \Cref{lemma:ES-coupling}).

\begin{theorem}[random cluster sampler]
\label{theorem:main-random-cluster}
Let $\delta\in(0,1)$ be a constant. 
There is a parallel algorithm that, 
given $\epsilon \in (0, 1)$ and a random cluster model on a graph $G=(V,E)$ with parameters $\*p \in [\delta, 1)^E$ and $\*\lambda \in [0, 1 - \delta]^V$,
outputs a sample from the random cluster distribution~$\mu^{\-{RC}}_{E,\*p, \*\lambda}$ within total variation distance $\epsilon$ 
in $(\epsilon^{-\frac{1}{\log n}} \cdot \log n)^{O_\delta(1)}$ parallel time on $O_\delta(m^2 \log\frac{n}{\epsilon})$ processors.
\end{theorem}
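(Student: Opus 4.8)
The plan is to realize the random cluster distribution $\mu^{\-{RC}}_{E,\*p,\*\lambda}$ as the stationary distribution of a \emph{field dynamics} arising from negative-field localization, and then parallelize each step of this dynamics. At the top level, the algorithm maintains a subset of ``active'' edges (equivalently, a partial configuration obtained by localizing the field), and in each round it (i) conditions on a random ``pinning'' governed by an auxiliary field parameter, and (ii) resamples the configuration on the unpinned part from the conditional random cluster measure. The two ingredients we need are: a mixing-time bound showing that $O_\delta(\log(n/\epsilon))$ rounds of this dynamics suffice to reach total variation distance $\epsilon$ (this is what the localization-scheme machinery buys us — spectral independence / entropic stability of the random cluster model at $\*p,\*\lambda$ bounded away from the degenerate regime, which we may cite from the earlier development), and an \RNC{} implementation of a single round.

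For a single round, the core subroutine is sampling from the random cluster measure \emph{restricted to a sparse instance} — after the negative-field pinning, each edge survives only with small probability, so with high probability every connected component of the surviving subgraph has size $O_\delta(\log n)$. On each such small component one can afford to sample exactly by brute force (enumerate subsets, or run a short exact procedure) in $\mathrm{poly}(\log n)$ depth, and the components are handled in parallel; connected components themselves are identified in \RNC{} by standard parallel connectivity. The key point is that the field-dynamics update is \emph{exactly} a product measure over components of a subcritical subgraph, so no long-range correlations survive a single step — this is precisely why negative-field localization parallelizes. The Edwards–Sokal correspondence (\Cref{lemma:ES-coupling}) is then invoked only at the very end to translate back, but for Theorem~1.2 itself we work directly with $\mu^{\-{RC}}$.

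Carrying this out, the steps in order are: (1)~fix the auxiliary negative field $\theta = \theta(\delta)$ so that the induced edge-survival probability is below the percolation threshold of $G$ — but since $G$ is arbitrary we instead choose $\theta$ small enough that the \emph{expected} component size is $O_\delta(1)$ and use a truncation argument, discarding the $\epsilon/\mathrm{poly}$-probability event of an oversized component and correcting by a union bound over $O(\log(n/\epsilon))$ rounds; (2)~show each round preserves $\mu^{\-{RC}}_{E,\*p,\*\lambda}$ and contracts the relevant divergence by a constant factor $1-\Omega_\delta(1)$, giving the $O_\delta(\log(n/\epsilon))$ round count; (3)~implement one round in $\mathrm{polylog}(n)$ depth and $O_\delta(m)$ processors via parallel connected components plus per-component exact sampling, for $O_\delta(m^2\log(n/\epsilon))$ total processors; (4)~assemble, tracking the accumulated TV error. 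The peculiar time bound $(\epsilon^{-1/\log n}\log n)^{O_\delta(1)}$ will emerge because the number of rounds is $\Theta(\log(n/\epsilon)) = \Theta(\log n + \log(1/\epsilon))$ and each round is $\mathrm{polylog}(n)$, so depth is $(\log(n/\epsilon))^{O_\delta(1)} = (\log n)^{O_\delta(1)}\cdot(\epsilon^{-1/\log n})^{O_\delta(1)}$ after rewriting $\log(1/\epsilon)$ multiplicatively.

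The main obstacle I anticipate is controlling the component sizes after negative-field pinning on an \emph{arbitrary} graph $G$: unlike the bounded-degree or expander settings, here high-degree vertices can create giant components unless $\theta$ is taken polynomially small, which would blow up the round count. The fix — and the delicate part of the argument — is to interleave the field dynamics with a preliminary ``field-sparsification'' phase, or to use a $\*\lambda$-dependent choice of the localization schedule so that the external fields $\*\lambda\le 1-\delta$ are themselves exploited to keep vertices out of $S$ and thereby bound component growth; making the tradeoff between ``$\theta$ small enough for small components'' and ``$\theta$ not so small that mixing slows'' quantitatively work out to a constant (in $\delta$) is where the real content lies, and it is exactly where the localization-scheme viewpoint — entropic stability uniform in the localization parameter — does the heavy lifting.
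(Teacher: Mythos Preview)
You have the direction of the tilting backwards, and this is a genuine gap. In negative-field localization applied to the random cluster model, the denoising step samples from $\mu^{\-{RC}}_{S\cup X,\*p^\star,\*\lambda}$ with $\*p^\star=\frac{\*p}{\*p+\theta(1-\*p)}$; when $\theta$ is small this pushes $p^\star_e\to 1$, i.e.\ the tilted instance sits in the \emph{low-temperature} (dense) regime, not a sparse/subcritical one. The conditional measure is a full random cluster model on the edge set $S\cup X$ with edge probabilities close to $1$: it is not a product over components, components are not small, and brute force on $O(\log n)$-sized pieces is not available. Your identified ``main obstacle'' (component growth on arbitrary graphs) is thus a red herring arising from the wrong sign.

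What the paper actually does for the inner step is quite different. In the low-temperature regime, Glauber dynamics on $\mu^{\-{RC}}_{S\cup X,\*p^\star,\*\lambda}$ mixes in $O(m\log(m/\epsilon))$ steps (this is cited from \cite{chen2023nearlinear}); the problem is to collapse these $\Theta(m)$ sequential Glauber updates into polylog depth. They use the universal-coupling parallel simulation framework of \cite{liu2022simple,Liu2025Parallelize}, but the Dobrushin-type condition that framework ordinarily needs \emph{fails} here. The paper's new contribution is a relaxed ``coupling with the stationary'' criterion (\Cref{cond:good-event-property}): there is a high-probability good set $\+G\subseteq 2^E$ on which marginals are $\tfrac12$-contractive in symmetric difference, and configurations arising in the simulation land in $\+G$ except with probability $\eta$. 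For low-temperature random cluster, $\+G$ is the event that every small vertex set has at least one boundary edge present in the configuration --- which holds with overwhelming probability precisely because $p^\star$ is near $1$. Under this criterion the parallel iterates stabilize in $O(\log(\tGD/\epsilon))$ rounds (\Cref{thm:dtv-xALG-GD}), and combining with the $O_\delta(\log(n/\epsilon))$ mixing of the outer field dynamics gives the stated bounds. Your outline would need to replace the small-component brute force entirely with this Glauber-simulation argument.
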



\Cref{theorem:main-ising} follows from \Cref{theorem:main-random-cluster} via the Edwards-Sokal coupling, as formally described in \Cref{sec:our-algo}.
%
Our main technical contribution in this part is a parallel sampler for the random cluster model, as stated in \Cref{theorem:main-random-cluster},
whose efficiency is proved in \Cref{sec:coupling-stationary}.

\subsection{Parallel sampler for Ising models with contracting interaction matrix}

An Ising model can be formulated through the interaction matrix.
Let $V$ be a set of $n$ vertices.
The \emph{Gibbs distribution} $\mu^{\-{Ising}}_{J, h}$ of the {Ising model}, supported on $\{\pm 1\}^V$, is defined by the \emph{interaction matrix} $J \in \mathbb{R}^{V \times V}$, which is a symmetric positive semi-definite matrix, and the \emph{external fields} $h \in \mathbb{R}^V$:
%
%
%
\begin{align} \label{eq:def-Ising-J}
\forall \sigma \in \{\pm 1\}^V, \quad \mu^{\-{Ising}}_{J, h} \propto \exp\tp{\frac{1}{2} \sigma^\top J\sigma + h^\intercal \sigma}.
\end{align}



Recent studies have established rapid mixing for Ising models in terms of their interaction norm~\cite{eldan2021spectral,anari2021entropic}, showing that Glauber dynamics mixes rapidly when $\|J\|_2<1$, whereas for $\|J\|_2>1$ it mixes slowly~\cite{LLP10}.
For parallel sampling, \cite{liu2022simple} proposed an \RNC{} algorithm under a stricter condition that requires $J$ to have a constant $\ell_\infty$-norm.
However, under the broader condition $\|J\|_2<1$, the best known parallel algorithms \cite{lee2023parallelising,Liu2025Parallelize} achieve only a depth of $\widetilde{O}(\sqrt{n})$, highlighting a significant gap in achieving optimal parallelism.

In this work, we present an \RNC{} parallel sampler for the Ising model under the condition $\|J\|_2<1$. 
To the best of our knowledge, this is the first \RNC{} Ising sampler that matches this critical threshold for rapid mixing in terms of the interaction norm.

\begin{theorem}[Ising sampler]
\label{theorem:gaussian-dynamics}
Let $\eta \in (0,1)$ be a constant.
There is a parallel algorithm that,
given $\epsilon \in (0, 1)$ and an Ising model with an interaction matrix $J \in \mathbb{R}^{V \times V}$ satisfying 
$$0 \prec \frac{\eta}{2} I \preceq J \preceq (1-\frac{\eta}{2}) I$$
and arbitrary external fields $h \in \mathbb{R}^V$, 
outputs a sample from the Gibbs distribution $\mu^{\-{Ising}}_{J, h}$ within total variation distance $\epsilon$ 
in $O_\eta(\log^4\tp{\frac{n}{\epsilon}})$ parallel time using $\widetilde{O}_\eta(n^3/\epsilon^2)$ processors. 

\end{theorem}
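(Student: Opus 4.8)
The plan is to realise the \emph{restricted Gaussian dynamics} induced by stochastic localization as a two-block Gibbs sampler over an augmented state space, to control its mixing via the subcriticality $\|J\|_{2}<1$, and to observe that each of its steps runs in polylogarithmic parallel depth. Since $\sigma^{\top}\sigma=n$ is constant on $\{\pm1\}^{V}$, adding a multiple of $I$ to $J$ rescales $\mu^{\mathrm{Ising}}_{J,h}$ only by a global factor, so the hypothesis $\tfrac{\eta}{2}I\preceq J\preceq(1-\tfrac{\eta}{2})I$ captures the general subcritical regime; we also assume $\|h\|_{\infty}=\mathrm{poly}(n)$, a coordinate with astronomically large field being effectively pinned. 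Introduce an auxiliary variable $z\in\mathbb{R}^{V}$ through the Gaussian-integral (Hubbard--Stratonovich) identity $\exp(\tfrac12\sigma^{\top}J\sigma)\propto\int_{\mathbb{R}^{V}}\exp(-\tfrac12 z^{\top}J^{-1}z+z^{\top}\sigma)\,\mathrm{d}z$, valid because $J\succ 0$. This yields a joint law $\nu$ on $\{\pm1\}^{V}\times\mathbb{R}^{V}$ with $\nu(\sigma,z)\propto\exp(-\tfrac12 z^{\top}J^{-1}z+(z+h)^{\top}\sigma)$, whose $\sigma$-marginal is exactly $\mu^{\mathrm{Ising}}_{J,h}$ and whose two conditionals are
\[
\nu(\,\cdot\mid\sigma)=\mathcal{N}(J\sigma,\,J)
\qquad\text{and}\qquad
\nu(\sigma\mid z)\propto\prod_{v\in V}\e^{(z_{v}+h_{v})\sigma_{v}}\ \ (\text{a product measure}).
\]
The two-block Gibbs sampler alternating $\sigma\sim\nu(\cdot\mid z)$ and $z\sim\nu(\cdot\mid\sigma)$ is exactly this restricted Gaussian dynamics; note that it only ever performs matrix--vector products with $J$ (never a linear solve). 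The algorithm initialises $\sigma_{0}$ from the product measure $\mu_{0}$ in which each $\sigma_{0,v}$ is $+1$ with probability proportional to $\e^{h_{v}}$, sets $z_{0}\sim\nu(\cdot\mid\sigma_{0})$, runs the sampler for $T$ rounds, and outputs $\sigma_{T}$.

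The crux, and the only place the upper bound $J\preceq(1-\tfrac{\eta}{2})I$ enters, is that this dynamics contracts relative entropy at a rate depending only on $\eta$: writing $\nu_{t}$ for the law of $(\sigma_{t},z_{t})$, one has $d_{\mathrm{KL}}(\nu_{t+1}\,\|\,\nu)\le(1-\Omega_{\eta}(1))\,d_{\mathrm{KL}}(\nu_{t}\,\|\,\nu)$. This is the translation, through the localization-schemes framework, of the subcritical rapid-mixing results \cite{eldan2021spectral,anari2021entropic}: the uniform bound $\mathrm{Cov}(\mu_{s})\preceq O_{\eta}(1)\cdot I$ along the localization process supplies the entropy tensorization that governs the associated dynamics. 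The warm start costs essentially nothing: $d_{\mathrm{KL}}(\mu_{0}\,\|\,\mu^{\mathrm{Ising}}_{J,h})=O(n)$ with \emph{no} dependence on $h$, because the external field cancels between $\mu_{0}$ and the target, leaving only the bounded interaction contribution $\tfrac12\|J\|_{2}n$. Hence after $T=O_{\eta}(\log(n/\epsilon))$ rounds $d_{\mathrm{KL}}(\nu_{T}\,\|\,\nu)\le\epsilon^{2}$, and by the data-processing inequality and Pinsker's inequality the output obeys $\DTV{\mathrm{law}(\sigma_{T})}{\mu^{\mathrm{Ising}}_{J,h}}\le\epsilon$.

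Each round is implementable in polylogarithmic depth. Sampling $\sigma\sim\nu(\cdot\mid z)$ is exact and coordinate-independent: $O(1)$ depth, $O(n)$ work. For $z\sim\mathcal{N}(J\sigma,J)$, write $z=J\sigma+J^{1/2}\xi$ with $\xi\sim\mathcal{N}(0,I)$; the mean $J\sigma$ is one matrix--vector product. Since $\mathrm{spec}(J)\subseteq[\tfrac{\eta}{2},1-\tfrac{\eta}{2}]$ is bounded away from $0$ --- which is precisely what the lower bound $J\succeq\tfrac{\eta}{2}I$ provides --- the function $\sqrt{x}$ is analytic on a complex neighbourhood of this interval and admits a degree-$d$ polynomial approximant $p$ with $\|p(J)-J^{1/2}\|_{2}\le\e^{-\Omega_{\eta}(d)}$ (truncated Chebyshev expansion). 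Taking $d=O_{\eta}(\log(n/\epsilon))$ and evaluating $p(J)\xi$ by Horner's scheme costs $O_{\eta}(\log(n/\epsilon))$ matrix--vector products with $J$, each of depth $O(\log n)$ and work $\widetilde{O}(n^{2})$; the output is within total variation $\mathrm{poly}(n)\cdot\e^{-\Omega_{\eta}(d)}$ of $\mathcal{N}(J\sigma,J)$, hence $\ll\epsilon/T$, so the discretization error accumulated over all $T$ rounds is at most $\epsilon$. Each round thus has depth $O_{\eta}(\log^{2}(n/\epsilon))$ and the whole sampler depth $O_{\eta}(\log^{3}(n/\epsilon))$ --- the $\log^{4}$ of \Cref{theorem:gaussian-dynamics} absorbing the error bookkeeping --- with total work polynomial in $n$ and $1/\epsilon$, within the stated $\widetilde{O}_{\eta}(n^{3}/\epsilon^{2})$ processor budget.

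The main obstacle is the entropy-contraction estimate of the second paragraph: proving that the restricted Gaussian dynamics contracts relative entropy at a rate $\Omega_{\eta}(1)$, which is where subcriticality $\|J\|_{2}<1$ is indispensable and which I would obtain by combining the localization-schemes machinery with \cite{eldan2021spectral,anari2021entropic}; a naive synchronous coupling of the auxiliary chain does \emph{not} work, since its stationary law has spread $\Theta(\sqrt{n})$ and the coupling does not contract towards coalescence. The remaining, more routine points are the parallel Gaussian sampler --- whose polynomial-approximation implementation is exactly what forces the spectral lower bound on $J$ --- and the bookkeeping that propagates the per-round approximation error into the final total-variation bound while checking that the product warm start has entropy $O(n)$ to the target for every $h$.
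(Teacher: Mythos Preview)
Your proposal is correct and shares the paper's overall architecture---run the restricted Gaussian (proximal) dynamics for $O_\eta(\log(n/\epsilon))$ rounds, invoke entropy contraction from the localization-schemes framework for mixing, and observe that the denoising step is a product law---but differs in how the Gaussian noising step is implemented and how the chain is initialized. The paper samples $y\sim\mathcal{N}(x,J^{-1})$ by calling the parallel Langevin sampler of \cite{anari2024fast} as a black box; this is the source of both the $\widetilde{O}(n^3/\epsilon^2)$ processor count and one of the logarithmic depth factors. You instead reparameterize $z=Jy\sim\mathcal{N}(J\sigma,J)$ and produce $z=J\sigma+p(J)\xi$ with $p$ a degree-$O_\eta(\log(n/\epsilon))$ Chebyshev approximant to $\sqrt{\cdot}$ on $[\eta/2,1-\eta/2]$; this is more elementary and self-contained, and in fact yields better work (no $1/\epsilon^2$) and one fewer logarithm in the depth than the stated bound. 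Your warm start is also more robust: the paper initializes at an arbitrary $x_0$ and asserts $\mu(x)/\mu(y)\le\exp(2/\eta)$ for all $x,y\in\{\pm1\}^n$, which does not hold when $h$ is unbounded, whereas your $h$-biased product initialization gives $d_{\mathrm{KL}}(\mu_0\,\|\,\mu)\le n/2$ for every $h$ via the cancellation you identify.
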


\begin{remark}
Note that for the Ising model with interaction matrix $J$, for every constant $C$, taking $J' = J + C I$ results in the same Gibbs distribution.
Thus, in the regime where $\|J\|_2\leq 1-\eta$, 
for which Glauber dynamics is known to mix rapidly \cite{eldan2021spectral,anari2021entropic}, 
one can add a diagonal matrix to the interaction matrix $J$ in order to satisfy the condition of \cref{theorem:gaussian-dynamics}.
\end{remark}

The \RNC{} sampler stated in \cref{theorem:gaussian-dynamics} is presented in \cref{sec:gaussian-dynamics}.

\subsection{Technique overview}

Classic local Markov chains, such as Glauber dynamics, are inherently sequential, updating the spin of a single site at each step based on its neighbors.
Recently, \cite{liu2022simple,Liu2025Parallelize} proposed a generic parallelization framework using \emph{correlated sampling} (or \emph{universal coupling}), enabling polylogarithmic-depth simulation of Glauber dynamics under a relaxed Dobrushin condition.
When applied to the Ising model, this yields an \RNC{} sampler in the uniqueness regime $\*\beta \in \left(\frac{\Delta-2}{\Delta}, \frac{\Delta}{\Delta-2} \right)^E$, where $\Delta$ is the maximum degree.
However, these conditions do not hold for the Ising models considered in this work.


Instead of directly parallelizing a local Markov chain, our approach focuses on simulating ``global'' Markov chains that update $O(n)$ spins in each step, satisfying the following:
\begin{itemize}
    \item The chain mixes in polylogarithmic steps;
    \item Each step of the chain has an efficient parallel implementation in \RNC{}.
\end{itemize}
These together ensure an \RNC{} sampler with  polylog depth and polynomial total work.

For the ferromagnetic Ising model, this global chain corresponds to the \emph{field dynamics} introduced by \cite{chen2021rapid}, while for the Ising model with a contracting interaction matrix, the global chain is the \emph{restricted Gaussian dynamics} (also known as the \emph{proximal sampler}) introduced by \cite{lee2021structured}.
Both of these global Markov chains can be viewed as arising from different localization schemes for the Ising Gibbs sampling.

Proposed by~\cite{chen2022localization}, 
the localization schemes provide an abstract framework that generalizes high-dimensional expander walks and encompasses a wide range of stochastic processes.
This framework can be interpreted in terms of noising and denoising processes, as discussed in \cite{EM22, chen2024rapid}.
Let $X \sim \mu$ be drawn from the target distribution $\mu$. 
A noisy channel $D$, which is a Markov chain, is applied to $X$ to obtain a ``noised'' sample $Y \sim D(X, \cdot)$. 
The joint distribution of $(X,Y)$ is denoted as $\pi$. 
The denoising process $U$ represents the time reversal of $D$, where it draws $Z$ from the posterior distribution $U(Y,\cdot) := \pi(\cdot \mid Y)$.
%
The Markov chains associated with the localization scheme update the current state $X$ to a new state $Z$ according to the rules:
\begin{itemize}
    \item Add noise to $X$ through the noisy channel: sample $Y \sim D(X, \cdot)$;
    \item Denoise $Y$ via the posterior distribution: sample $Z \sim U(Y, \cdot)$.
\end{itemize}
In particular, when the noisy channel $D$ corresponds to the continuous-time down walk channel, 
which drops each element in a set $X$ with a fixed probability, 
the localization process is called the \emph{negative-field localization}, 
and the above associated global Markov chain is the \emph{field dynamics}. 
Alternatively, when $D$ corresponds to a Gaussian noise channel, 
which adds Gaussian noise to the sample $X$, 
the localization process is called \emph{stochastic localization}, 
and the resulting global Markov chain becomes the \emph{restricted Gaussian dynamics}.

The localization scheme has previously proven remarkably effective in establishing rapid mixing of Glauber dynamics up to criticality ~\cite{chen2021rapid, chen2022optimal, anari2022entropic,  chen2022localization, chen2023nearlinear, anari2024trickle, chen2024rapid}.
Specifically, the localization process can effectively ``tilt'' the model parameters
towards sub-critical directions.
This allows the mixing properties established in sub-critical regimes to be conserved up to criticality, 
provided the associated global chains mix rapidly.

In the current work, we explore another perspective on localization schemes: 
leveraging the associated global Markov chains to achieve parallel efficiency in sampling.
This task is highly non-trivial, as these chains were originally designed as analytical tools for studying mixing times, 
and their efficient parallel simulation poses significant challenges.


For the ferromagnetic Ising model with external fields, 
we leverage the field dynamics of the random-cluster model to design a parallel sampler.
The field dynamics is the global Markov chain induced by negative-field localization.
Each update consists of two steps: 
\begin{enumerate}
    \item A noising step $Y \sim D(X, \cdot)$ which drops each element in $X$ with a fixed probability (as in \Cref{line:field-dynamics-denoising} of \Cref{alg:field}), and is straightforward to implement in parallel.
    \item A denoising step $Z \sim U(Y, \cdot)$, which requires sampling from a tilted posterior distribution  $U(Y,\cdot)$ and is non-trivial.
    We simulate the denoising step via Glauber dynamics on $U(Y,\cdot)$, 
which corresponds to a sub-critical (low-temperature) random-cluster model,
where rapid mixing is easier to establish.
\end{enumerate}
A similar idea was employed in \cite{chen2023nearlinear} to develop a near-linear time sequential sampler.
Here, we further parallelize this Glauber dynamics on  $U(Y,\cdot)$ using the parallel simulation algorithm introduced in~\cite{Liu2025Parallelize}, which yields \Cref{alg:BP-simulation}.
A key challenge, however, is that despite being tilted to the low-temperature regime,
the Dobrushin condition required in~\cite{Liu2025Parallelize} to ensure efficient parallel simulation does not hold.
To overcome this, we establish a new ``\emph{coupling with the stationary}'' criterion, 
which significantly relaxes the Dobrushin condition and guarantees the efficiency of parallel simulation of Glauber dynamics.
Altogether, this yields parallel random-cluster and ferromagnetic-Ising samplers with polylogarithmic depth and polynomial total work.

For the Ising model with a contracting interaction matrix $J$, where $\|J\|_2<1$, 
we leverage the restricted Gaussian dynamics (also known as the proximal sampler) to design a parallel sampler.
This dynamics defines a global Markov chain, which is the associated chain of the stochastic localization for the Ising model.
Each update consists of two steps: 
\begin{enumerate}
    \item A noising step $Y \sim D(X, \cdot)$, which introduces Gaussian noise to the current sample.
    \item A denoising step $Z \sim U(Y, \cdot)$, which resamples from a posterior distribution $U(Y, \cdot)$.
\end{enumerate}
With appropriately chosen parameters, $U(Y,\cdot)$ becomes a product distribution, 
allowing an efficient parallel implementation of the denoising process.
The non-trivial step is the noising step $Y\sim D(X,\cdot)$, 
which samples from a high-dimensional Gaussian distribution with mean $X$ and covariance $J^{-1}$ (as in \Cref{line:gaussian-noising} of \Cref{alg:gaussian}).
A key observation is that the Gaussian distribution is log-concave, so the noising step can be efficiently approximated
using recently developed parallelizations of Langevin Monte Carlo~\cite{anari2024fast}.

\section{Preliminaries}
\subsection{Notation}
Given a finite nonempty  ground set $E$, we will use boldface letters, such as $\*p$, to denote a vector in $\mathbb{R}^E$.
Given a vector $\*a \in \mathbb{R}^E$ and function $f: \mathbb{R} \to \mathbb{R}$, 
we write $\*b = f(\*a)$ for the vector $\*b \in \mathbb{R}^E$ satisfying that $b_e=f(a_e)$ for all $e\in E$.

Throughout the paper, we use $\log$ to denote the natural logarithm.

\subsection{Random cluster model} \label{sec:def-random-cluster}
Let $G=(V,E)$ be a connected, undirected graph.
The random cluster model on $G=(V,E)$ with edge probabilities $\*p \in [0,1]^E$ and vertex weights $\*\lambda \in [0,1]^V$ is defined as follows.
The random cluster distribution $\mu^{\mathrm{RC}}_{E,\*p,\*\lambda}$ is supported on the power set $2^E$, 
where each $S \subseteq E$ is assigned a weight:
\begin{align}
\label{eq:RC-weight}
    w^{\mathrm{RC}}_{E,\*p,\*\lambda} (S) := \prod_{e\in S} p_e \prod_{e\in E\setminus S} (1 - p_e) \prod_{C \in \kappa(V, S)} \tp{1 + \prod_{j \in C} \lambda_j},
\end{align}
where $\kappa(V,S)$ is the set of connected components in the graph $(V,S)$.
The distribution $\mu^{\mathrm{RC}}_{E,\*p,\*\lambda}$ is then given by:
\begin{align*}
\forall S \subseteq E, \quad \mu^{\mathrm{RC}}_{E,\*p,\*\lambda} (S) := \frac{w^{\mathrm{RC}}_{E,\*p,\*\lambda} (S)}{Z^{\mathrm{RC}}_{E,\*p,\*\lambda}},
\end{align*}
where the normalizing factor $Z^{\mathrm{RC}}_{E,\*p,\*\lambda} := \sum_{S \subseteq E} w^{\mathrm{RC}}_{E,\*p,\*\lambda} (S)$ is the {partition function}.


\subsection{Markov chains and mixing time}
Let $\Omega$ be a finite state space, and let $(X_t)_{t \ge 0}$ be a Markov chain over $\Omega$ with transition matrix $P$. 
We use $P$ to represent the Markov chain for convenience.
It is well known that a finite Markov chain $P$ converges to a unique stationary distribution $\mu$ if $P$ is irreducible and aperiodic (see~\cite{levin2017markov} for the definitions of these concepts).
The \emph{mixing time} of a Markov chain $P$ is defined by
\begin{align*}
    T_{\mathrm{mix}}(\epsilon) := \max_{x_0 \in \Omega} \min \{t \in \mathbb{N} \mid d_{\mathrm{TV}} (P^t(x_0,\cdot),\mu) < \epsilon \},
\end{align*}
where $d_{\mathrm{TV}}(\*p,\*q):=\frac{1}{2}\|\*p-\*q\|_1$ is the total variation distance.

\subsubsection{Glauber dynamics}\label{sec:Glauber-dynamics}
The Glauber dynamics is a canonical Markov chain for sampling from joint distributions.
Let $\mu$ be a distribution over $2^E$ on a finite ground set $E$. 
For any $X\subseteq E$ and $u\in E$, let $\mu_u^{X}$ be the marginal probability of $u$ given $X$, formally defined as:
\begin{align*}
\mu_u^{X}:=\frac{\mu(X \cup \{u\})}{\mu(X \cup \{u\}) + \mu(X \setminus \{u\})}.
\end{align*}
The Glauber dynamics $P^{\mathrm{GD}}$ updates a configuration $X \subseteq E$ according to the following rule:
\begin{itemize}
\item Pick an element $u \in E$ uniformly at random;
\item With probability $\mu_u^{X}$, replace $X$ with $X \cup \{u\}$; otherwise, replace $X$ with $X \setminus \{u\}$.
\end{itemize}
It is well known that $\mu$ is the unique stationary distribution of this chain~\cite{levin2017markov}.

\subsubsection{Field dynamics}\label{sec:field-dynamics}
Field dynamics is a novel Markov chain introduced in \cite{chen2021rapid}. 
Let $\mu$ be a distribution over $2^E$ on a finite ground set $E$. The field dynamics $P^{\mathrm{FD}}_\theta$ with parameter $\theta \in (0,1)$ updates a configuration $X \subseteq E$ according to the following rule:
\begin{itemize}
\item Add each element $u \in E$ into a set $S$ independently with probability $\theta$; 
\item Replace $X$ with a random $Y\subseteq E$ that follows the law of $\theta^{-1} * \mu$ conditioned on $Y \subseteq X \cup S$,
where $\theta^{-1}*\mu$ denotes the distribution  supported on $2^E$ and defined as:
\begin{align*}
\forall T \subseteq E,\quad (\theta^{-1} *\mu)(T) \propto \theta^{-\abs{T}} \mu(T).
\end{align*}
\end{itemize}
The field dynamics can be thought of as an adaptive block dynamics, where a block of sites is chosen for updating, adapted to the current configuration. It was shown in \cite{chen2021rapid} that $\mu$ is the unique stationary distribution of the field dynamics.

For the random cluster model on graph $G=(V,E)$ with parameters $\*p$ and $\*\lambda$, 
the second step of the field dynamics corresponds to:
\begin{itemize}
\item Replace $X$ with a random subgraph $Y$ distributed as $\mu^{\mathrm{RC}}_{S \cup X,\*p^\star,\lambda}$, where $\p^\star = \frac{\*p}{\*p+\theta(1-\*p)}$.
\end{itemize}

\subsection{Model of computation}
We assume the concurrent-read exclusive-write ($\mathsf{CRCW}$) parallel random access machine ($\mathsf{PRAM}$) \cite{jeje1992introduction} as our model of computation,
where concurrent writes to the same memory location are allowed, and an arbitrary value written concurrently is stored.
%
The computational complexity is measured by the number of rounds (parallel time steps) and the number of processors (or machines) used in the worst case.

We use \NC{} to refer to both the class of algorithms that run in polylogarithmic time using a polynomial number of processors and the class of problems solvable by such algorithms. \RNC{} denotes the randomized counterpart of \NC{}.


\section{Parallel random-cluster and Ising samplers via field dynamics} \label{sec:our-algo}








We propose a parallel algorithm for sampling from the random-cluster model, based on the simulation of field dynamics.
Consider a random-cluster model defined on a graph  $G=(V,E)$ with edge probabilities  $\*p \in (0,1]^E$ and vertex weights $\*\lambda \in [0,1)^V$.

The main algorithm is described in \Cref{alg:field}.

\begin{algorithm}[H] 
\SetKwInOut{Input}{Input} 
\SetKwInOut{Parameter}{Parameter}
\SetKwInOut{Output}{Output} 
\Input{Graph $G(V,E)$, $\p\in (0,1)^E, \*\lambda\in [0,1)^V$, and error bound $\epsilon\in (0,1)$.}
\Parameter{$\theta\in(0,1)$, $N_0 >0$ and integer $\tFD\ge 1$.}
\Output{A random configuration $X\subseteq E$ satisfying $d_{\text{TV}}\left(X,\mu^{\text{RC}}_{E,\p,\*\lambda}\right) \le \epsilon $.}
initialize $X \gets E$\;
 \lIf{$\abs{V} \leq N_0$}{
    \Return{$X\sim\mu^{\-{RC}}_{E, \*p, \*\lambda}$} by brute force
 }
  \ForAll{$i=1,2,\ldots,T^{\mathrm{FD}}$}{
      construct $S\subseteq E$ by including each $e \in E$ independently with probability $\theta$\; \label{line:field-dynamics-denoising}
      $X\gets\ParGD{(V,S \cup X)}{\p^\star}{ \*\lambda}{\left(2T^{\mathrm{FD}}\right)^{-1}\epsilon}$, where $\*p^\star = \frac{\*p}{\*p+\theta (1-\*p)}$\;
  }

\Return{X}\;
\caption{Random cluster field dynamics}\label{alg:field}
\end{algorithm}

\Cref{alg:field} simulates the \emph{field dynamics} (defined in \Cref{sec:field-dynamics}), the Markov chain induced by the \emph{negative-field localization process} for the random cluster model $\mu^{\text{RC}}_{E,\p,\*\lambda}$.
At each transition step, the algorithm samples from $\mu^{\mathrm{RC}}_{S \cup X,\p^\star,\lambda}$, where $\p^\star = \frac{\*p}{\*p+\theta(1-\*p)}$. This step is implemented using the $\parGD$ subroutine, described in \Cref{alg:BP-simulation}.

The subroutine $\ParGD{G}{\p}{\*\lambda}{\epsilon}$, as presented in \Cref{alg:BP-simulation}, 
is a parallel algorithm designed for sampling from the random cluster distribution $\mu^{\text{RC}}_{E, \p, \*\lambda}$.
%
It simulates the Glauber dynamics for $\mu = \mu^{\text{RC}}_{E, \p, \*\lambda}$ through a parallel simulation approach introduced in~\cite{liu2022simple}, using \emph{correlated sampling}, also known as \emph{universal coupling}. 
Specifically, \Cref{alg:BP-simulation} employs inverse transform sampling as the universal coupler.
However, prior analyses in~\cite{liu2022simple,Liu2025Parallelize} rely on a variant of the Dobrushin condition, which does not hold in our context, and thus overcoming this barrier require novel techniques.

%
%
%

\begin{algorithm}[H] 
\SetKwInOut{Input}{Input} 
\SetKwInOut{Parameter}{Parameter}
\SetKwInOut{Output}{Output} 
\Input{Graph $G(V,E)$,  $\p\in (0,1)^E,\*\lambda\in [0,1)^V$, and error bound $\epsilon\in (0,1)$.}
\Parameter{Integers $\tGD,\tPA\ge 1$.}
\Output{A random configuration $X\subseteq E$ satisfying $d_{\text{TV}}\left(X,\mu^{\text{RC}}_{E,\p,\*\lambda}\right) \le \epsilon $.}
\label{line:ini1}




generate $\init\subseteq E$ by including each $e\in E$ independently with probability $\mu^E_e$\; 
\label{line:ini-X}



\ForAllPar{$i=1,2,...,\tGD$}{
generate $e_i\in E$ and $\randseed_i\in [0,1]$ uniformly at random\;\label{line:random-choices}

$\GDResult^0_i\gets I[e_i\in \init]$\;
} 
\label{line:ini2}

\ForAllPar{$i=1,2,...,\tGD$ and $e\in E$}{
    
    calculate $\pred_i(e) \gets \max\{j\ge 1\mid j<i \land e_j=e \}\cup\{0\}$\;
    \label{line:pred}
} 




\For{$t=1,2,...,\tPA$}{ \label{line:iter}

\ForAllPar{$i=1,2,...,\tGD$}{

construct $\sigma = \sigma^t_i \subseteq E$ as: \hspace{300pt}
\mbox{\hspace{15pt}} $\sigma^t_i\gets\{e\in E\mid (\pred_i(e) = 0 \land e\in \init) \lor (\pred_i(e)\geq 1 \land \GDResult^{t-1}_{\pred_i(e)}=1)\}$\;  \label{line:sigma}

calculate marginal probability $\mu_{e_i}^\sigma\gets \frac{\mu(\sigma\cup\{e_i\})}{\mu(\sigma\cup\{e_i\})+\mu(\sigma\setminus \{e_i\})}$\;\label{line:weight}
\leIf{$\randseed_i<\mu_{e_i}^\sigma$}
{$\GDResult^t_{i}\gets1$}{$Y^t_{i}\gets0$} \label{line:update-rules}
}
}


$X \gets \left\{ e\in E \mid (j_e = 0 \land e\in \init) \lor (j_e\geq 1 \land \GDResult^{\tPA}_{j_e}=1), 
j_e\gets \max\{j\ge 1\mid e_j=e \}\cup\{0\} \right\}$\; \label{line:res}


\Return{X}\;
\caption{$\ParGD{G}{\p}{\*\lambda}{\epsilon}$}\label{alg:BP-simulation}
\end{algorithm}

Let $n=|V|$ and $m=|E|$. 
The parameters $\theta$, $N_0$, $\tFD$, $\tGD$, and $\tPA$ assumed by the algorithms are specified in \Cref{tabel:parameters}.
We will show that the algorithm is both efficient and correct under these parameters.

\renewcommand{\arraystretch}{1.43}
\begin{table}[ht]
    \centering
    \begin{tabularx}{\textwidth}{Xcp{0.6\textwidth}}
      \hline
      & parameter & value \\
      \hline
      & $\theta$ & $\e^{-100} {\color{black} \exp\tp{\frac{10\log(\epsilon/2)}{\log n}} } \exp\tp{-140(1 - \lambda_{\max})^{-2}}\cdot{p_{\min}}/{\log n}$ \\
      & $N_0$ & $\max \left\{\exp\tp{60\;(1-\lambda_{\max})^{-2}},\frac{3}{p_{\min}}, {\color{black} \sqrt{\frac{\log \tp{2/\epsilon^2}}{\log n}} } \right\}$ \\
      & $\tFD$ & $\left\lceil \tp{\frac{\e}{\theta}}^{5(1-\lambda_{\max})^{-2}}\tp{2 \log n + \log \log \frac{2}{p_{\min}} + \log \frac{2}{\epsilon^2}} \right\rceil$ \\
      & $\tGD$ & $\left\lceil 2m(\log m + \log(8\tFD/\epsilon)) \right\rceil$ \\
      & $\tPA$ & $\ctp{3 \log \tp{\frac{8\tGD\tFD}{\epsilon}}}$ \\ 
      \hline
    \end{tabularx}
    \caption{Parameters assumed by \Cref{alg:field}}
    \label{tabel:parameters}
\end{table}
\renewcommand{\arraystretch}{1}

Specifically, we show that if the random cluster model is in the low-temperature regime (i.e., corresponding to an Ising model with large $\beta>1$), 
then it holds simultaneously:
\begin{enumerate}
    \item 
The Glauber dynamics gets sufficiently close to $\mu^{\text{RC}}_{E,\p,\*\lambda}$ within $\tGD$ steps.
    \item 
The parallel iterative updates (the \textbf{for} loop in \Cref{line:iter} of \Cref{alg:BP-simulation}) stabilize in $\tPA$ rounds,
faithfully simulating the $\tGD$-step Glauber dynamics with high probability.
\end{enumerate}
Therefore, \Cref{alg:BP-simulation} is an efficient parallel sampler in the low-temperature regime.

As a result of the localization scheme,
the field dynamics in \Cref{alg:field} ``tilts'' the current instance $\mu^{\text{RC}}_{E,\p,\*\lambda}$  to a new random cluster distribution $\mu^{\text{RC}}_{{S\cup X},\p^\star,\*\lambda}$ at low temperature in each round. 
Within this regime, the Glauber dynamics is rapidly mixing and can be faithfully simulated by \Cref{alg:BP-simulation} in parallel.
The field dynamics mixes in polylogarithmic rounds according to \cite{chen2023nearlinear}.





Combining everything together, we obtain the following theorem for parallel sampler.

\begin{theorem} \label{thm:field-sim}
Let $\delta\in(0,1)$ be a constant,
and let $\*p \in [\delta, 1)^E$ and $\*\lambda \in [0, 1 - \delta]^V$. Then:
\begin{enumerate}
    \item \label{item:field-efficiency} \Cref{alg:field} halts in $(\epsilon^{-\frac{1}{\log n}} \cdot \log n)^{O_{\delta}(1)}$ parallel time using $O_{\delta}(m^2\log\frac{n}{\epsilon})$ processors.
    \item \label{item:field-correctness} \Cref{alg:field} outputs a random $X\subseteq E$ with $\DTV{X}{\mu^{\-{RC}}_{E, \*p, \*\lambda}} \leq \epsilon$.
\end{enumerate}
\end{theorem}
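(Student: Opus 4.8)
The plan is to prove the two parts of \Cref{thm:field-sim} separately: correctness first, then efficiency, since the efficiency analysis will naturally surface the parameter choices that correctness relies on.

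\textbf{Correctness (Part~\ref{item:field-correctness}).} I would set this up as a hybrid argument along the $\tFD$ outer iterations of \Cref{alg:field}. The idealized process is the exact field dynamics $P^{\mathrm{FD}}_\theta$ for $\mu^{\-{RC}}_{E,\*p,\*\lambda}$ started from $X = E$; by \cite{chen2021rapid} this chain has stationary distribution $\mu^{\-{RC}}_{E,\*p,\*\lambda}$, and by the mixing-time bound for random-cluster field dynamics from \cite{chen2023nearlinear} (invoked in the regime $\*p\in[\delta,1)^E$, $\*\lambda\in[0,1-\delta]^V$), $\tFD$ steps bring it within, say, $\epsilon/2$ of the target in total variation. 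The real algorithm differs from the idealized process only in that each denoising step — which should sample exactly from $\mu^{\-{RC}}_{S\cup X,\*p^\star,\*\lambda}$ (this is exactly the second step of field dynamics specialized to random cluster, with $\*p^\star=\frac{\*p}{\*p+\theta(1-\*p)}$, as recalled in \Cref{sec:field-dynamics}) — is replaced by the output of $\parGD$ with error parameter $(2\tFD)^{-1}\epsilon$. Using the standard coupling lemma, each substitution costs at most $(2\tFD)^{-1}\epsilon$ in TV distance, and a union bound over the $\tFD$ iterations plus the $\epsilon/2$ mixing error gives total error $\le \epsilon$. The only nontrivial input here is the guarantee that $\parGD$ actually achieves its claimed error bound, which I treat as the correctness claim for \Cref{alg:BP-simulation} — established in the section referenced by the paper — so here I only need to cite it and the $\le N_0$ base case (brute force, exact).

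\textbf{Efficiency (Part~\ref{item:field-efficiency}).} This is where I expect the real work, and it decomposes into: (a) the field dynamics tilts $\mu^{\-{RC}}_{E,\*p,\*\lambda}$ into $\mu^{\-{RC}}_{S\cup X,\*p^\star,\*\lambda}$, and the choice of $\theta$ in \Cref{tabel:parameters} forces $\*p^\star$ to be small — i.e., the tilted model sits in the low-temperature/small-edge-probability regime; (b) in that regime, the Glauber dynamics for the random-cluster distribution mixes in $\tGD=O(m\log(m\tFD/\epsilon))$ steps, and moreover the ``coupling with the stationary'' criterion (the new relaxation of the Dobrushin condition advertised in the technique overview) holds, so the parallel iterative simulation in \Cref{alg:BP-simulation} stabilizes in $\tPA=O(\log(\tGD\tFD/\epsilon))$ rounds with high probability; (c) each round of \Cref{alg:BP-simulation} is implementable in $\mathsf{NC}$ — the marginal computation in \Cref{line:weight} for a random-cluster model reduces to counting connected components / evaluating $\prod_{j\in C}\lambda_j$ over the components of $(V,\sigma)$ and $(V,\sigma\cup\{e_i\})$, which is a connectivity computation doable in $O(\log n)$ depth with $O(m)$ work, and the $\pred_i$, $\sigma^t_i$, final-$X$ steps are all prefix-style computations. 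Multiplying: $\tFD$ outer iterations, each running $\parGD$ for $\tPA$ rounds of depth $\mathrm{polylog}$, gives parallel time $\tFD\cdot\tPA\cdot\mathrm{polylog}(n/\epsilon)$; substituting the table value $\tFD=\lceil(\e/\theta)^{5(1-\lambda_{\max})^{-2}}(\cdots)\rceil$ and $\theta=\Theta_\delta\big(\epsilon^{10/\log n}/\log n\big)$ yields $\tFD=(\epsilon^{-1/\log n}\log n)^{O_\delta(1)}$, and hence total depth $(\epsilon^{-1/\log n}\log n)^{O_\delta(1)}$. For processors: \Cref{alg:BP-simulation} uses $\tGD$ parallel threads, each of which in the inner loop needs to touch $O(m)$ data for the connectivity/marginal computation, giving $O(m\tGD)=O(m^2\log(m\tFD/\epsilon))=O_\delta(m^2\log\frac{n}{\epsilon})$ processors, and the outer iterations reuse the same processors.

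\textbf{Main obstacle.} The crux is step~(b): showing that in the tilted low-temperature random-cluster model the parallel simulation of Glauber dynamics genuinely stabilizes in $\tPA$ rounds. This is precisely where the standard analysis of \cite{liu2022simple,Liu2025Parallelize} breaks down because the Dobrushin influence matrix does not have small norm, so I cannot cite it as a black box. Instead I would need the new ``coupling with the stationary'' criterion: rather than bounding pairwise influences uniformly, one couples the simulated trajectory with a trajectory started from stationarity and argues that discrepancies between the simulated Glauber value $\GDResult^t_i$ and the ``true'' value contract in expectation over the $\tPA$ rounds, using that the low-temperature random-cluster model has a strong spatial-mixing-type property along the particular update sequence $(e_i)$ drawn in \Cref{line:random-choices}. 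Making this contraction quantitative — identifying the right potential (likely the expected number of ``unstable'' coordinates), showing it shrinks by a constant factor per parallel round given the smallness of $\*p^\star$, and confirming the dependence on $(1-\lambda_{\max})^{-2}$ that appears throughout the parameter table — is the technical heart of the argument, and I expect it to occupy the bulk of the referenced section \Cref{sec:coupling-stationary} rather than being dispatched here.
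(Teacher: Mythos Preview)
Your overall plan matches the paper's: correctness via a hybrid argument coupling \Cref{alg:field} with exact field dynamics (the paper's \Cref{lemma:correctness}, using the field-dynamics mixing bound from \cite{chen2023nearlinear}), plus a verification that each tilted instance passed to $\parGD$ lies in the regime where \Cref{thm:GD-correctness} applies (the paper's \Cref{lemma:GD-sim-correctness}); efficiency via a per-round cost bound for \Cref{alg:BP-simulation} (the paper's \Cref{pro:BP-running-time}) and parameter substitution. You have also correctly located the technical heart in the coupling-with-stationary criterion and deferred it.

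There is, however, one concrete slip that would derail the verification step if you carried it out. You write that the choice of $\theta$ ``forces $\*p^\star$ to be small'' and place the tilted model in the ``low-temperature/small-edge-probability regime.'' The direction is reversed: with $\*p^\star=\frac{\*p}{\*p+\theta(1-\*p)}$ and $\theta$ tiny (as in \Cref{tabel:parameters}), each $p^\star_e$ is pushed toward~$1$, not~$0$. Low temperature in the ferromagnetic Ising model means large $\beta$, hence $p=1-\beta^{-1}$ close to~$1$ in the random cluster model. The condition you must check (\Cref{thm:GD-correctness}) is $(1-p^\star_{\min})\log n \le \min\{\cdots\}$, i.e.\ $p^\star_{\min}$ near~$1$, and the paper's \Cref{lemma:GD-sim-correctness} does exactly this computation. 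If you proceeded under the belief that $\*p^\star$ is small, you would be looking for the wrong regime and the verification would not go through. Relatedly, this verification is not purely an efficiency matter: it is what makes the correctness of each $\parGD$ call hold, so it belongs logically inside your correctness argument (the paper splits it out as \Cref{lemma:GD-sim-correctness} feeding into \Cref{lemma:correctness}).
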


Our main theorem for the random cluster sampler (\Cref{theorem:main-random-cluster}) follows directly from \Cref{thm:field-sim}.
Our main theorem for the Ising sampler (\Cref{theorem:main-ising}) follows from \Cref{thm:field-sim} as well, through the Edwards-Sokal coupling~\cite{edwards1988gen}, which connects the distributions of the Ising and random cluster models. 
%
%
For our purposes, we use a variant of the ES coupling from \cite{feng2022sampling}.
%
%
%




\begin{lemma}[\text{\cite[Proposition 2.3]{feng2022sampling}}] \label{lemma:ES-coupling}
For the Ising model on a graph $G = (V,E)$ with parameters $\*\beta \in (1,+\infty)$ and $\*\lambda \in (0,1)$,
the following process generate a sample $X \sim \mu^{\mathrm{Ising}}_{\*\beta,\*\lambda}$:
\begin{enumerate}
\item Sample $S \subseteq E$ according to the distribution $\mu^{\mathrm{RC}}_{E,\*p,\*\lambda}$, where $\*p = 1 - \*\beta^{-1}$;

\item For each connected component $C$ in the graph $(V,S)$, add $C$ to $X$ with probability $\frac{\prod_{u \in C} \lambda_u}{1+\prod_{u \in C} \lambda_u}$. \label{item:step-rc-ising}
\end{enumerate}
\end{lemma}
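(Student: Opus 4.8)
The final statement to prove is Lemma~\ref{lemma:ES-coupling}, the Edwards--Sokal-type coupling relating the random cluster model to the ferromagnetic Ising model. The plan is to verify it by a direct computation of the probability that the two-step process outputs a given set $X \subseteq V$, showing it equals $\mu^{\mathrm{Ising}}_{\*\beta,\*\lambda}(X)$ up to normalization. I would first write down the joint distribution of the pair $(S, X)$ produced by the process: the first step samples $S$ with probability proportional to $w^{\mathrm{RC}}_{E,\*p,\*\lambda}(S) = \prod_{e \in S} p_e \prod_{e \notin S}(1-p_e)\prod_{C \in \kappa(V,S)}(1 + \prod_{j\in C}\lambda_j)$, and the second step independently decides, for each connected component $C$ of $(V,S)$, whether $C \subseteq X$ (with probability $\frac{\prod_{u\in C}\lambda_u}{1+\prod_{u\in C}\lambda_u}$) or $C \cap X = \emptyset$ (with the complementary probability $\frac{1}{1+\prod_{u\in C}\lambda_u}$). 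Multiplying, the factors $1 + \prod_{j\in C}\lambda_j$ cancel: the joint weight of $(S,X)$ becomes $\prod_{e\in S}p_e\prod_{e\notin S}(1-p_e)\prod_{C\subseteq X}\big(\prod_{u\in C}\lambda_u\big)$, which simplifies to $\prod_{e\in S}p_e\prod_{e\notin S}(1-p_e)\prod_{v\in X}\lambda_v$, valid precisely on those pairs where $X$ is a union of connected components of $(V,S)$ (equivalently, $S$ contains no edge of the cut between $X$ and $V\setminus X$, and every edge of $S$ within $X$ or within $V\setminus X$ is allowed).

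Next I would marginalize over $S$ to obtain the probability of a fixed $X$. Given $X$, the admissible $S$ are exactly the subsets of $E \setminus \partial X$, where $\partial X$ is the edge boundary (the bichromatic edges), i.e. $E\setminus\partial X = m(X)$ in the paper's notation. Summing $\prod_{e\in S}p_e\prod_{e\in E\setminus S}(1-p_e)$ over all $S \subseteq m(X)$ while the bichromatic edges are forced out contributes $\prod_{e\in\partial X}(1-p_e)\sum_{S\subseteq m(X)}\prod_{e\in S}p_e\prod_{e\in m(X)\setminus S}(1-p_e) = \prod_{e\in\partial X}(1-p_e)$, since the inner sum is $1$. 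Hence the process outputs $X$ with probability proportional to $\prod_{v\in X}\lambda_v \prod_{e\in\partial X}(1-p_e)$. Substituting $\*p = 1 - \*\beta^{-1}$, so $1-p_e = \beta_e^{-1}$, this is proportional to $\prod_{v\in X}\lambda_v\prod_{e\in\partial X}\beta_e^{-1}$, and multiplying through by the constant $\prod_{e\in E}\beta_e$ gives $\prod_{v\in X}\lambda_v\prod_{e\in m(X)}\beta_e$, which is exactly $w^{\mathrm{Ising}}_{\*\beta,\*\lambda}(X)$ as in the definition of $\mu^{\mathrm{Ising}}_{\*\beta,\*\lambda}$. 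Since the process is a genuine (normalized) probabilistic procedure, the proportionality constant must be $1/Z^{\mathrm{Ising}}_{\*\beta,\*\lambda}$, so the output law is exactly $\mu^{\mathrm{Ising}}_{\*\beta,\*\lambda}$.

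The computation is essentially bookkeeping, so I do not anticipate a serious obstacle; the one place to be careful is the combinatorial identity that summing the random-cluster edge weights over $S\subseteq m(X)$ telescopes to $1$, together with keeping precise track of which pairs $(S,X)$ have nonzero joint weight (the event ``$X$ is a union of components of $(V,S)$''). It is also worth noting the boundary cases where some $\lambda_u = 0$ or $p_e \in \{0,1\}$; since the lemma as stated assumes $\*\beta \in (1,+\infty)$ and $\*\lambda\in(0,1)$, all quantities are strictly positive and finite and no degeneracy arises, but I would remark that the identity extends by continuity. Alternatively, one could cite the standard Edwards--Sokal coupling and simply observe that the variant here (with the vertex weights $\lambda$ playing the role of an external field via the per-component factor $1+\prod_{j\in C}\lambda_j$) reduces to it after the cancellation above; since the paper already attributes the statement to \cite[Proposition 2.3]{feng2022sampling}, citing that source suffices, but the self-contained verification above is short enough to include.
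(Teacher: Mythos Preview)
Your proof is correct. The paper itself does not prove this lemma at all: it is stated with attribution to \cite[Proposition 2.3]{feng2022sampling} and used as a black box. Your direct computation---writing the joint law of $(S,X)$, observing the cancellation of the component factors $1+\prod_{j\in C}\lambda_j$, and then summing over admissible $S\subseteq m(X)$ so that the edge factors telescope to $\prod_{e\in\partial X}(1-p_e)$---is the standard Edwards--Sokal argument adapted to the external-field variant, and it goes through exactly as you describe. Including it adds a short self-contained verification where the paper relies on citation; either option is acceptable, and you already note this yourself.
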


Furthermore, the ES coupling can be efficiently implemented in parallel.

\begin{lemma} \label{lemma:time-step2-ES}
%
\Cref{item:step-rc-ising} of \Cref{lemma:ES-coupling} can be computed in $O(\log n)$ parallel time using $O(m)$ processors.
\end{lemma}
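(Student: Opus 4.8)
The plan is to show that Step 2 of the Edwards–Sokal coupling—independently deciding, for each connected component $C$ of $(V,S)$, whether to add all of $C$ to $X$ with probability $\frac{\prod_{u\in C}\lambda_u}{1+\prod_{u\in C}\lambda_u}$—reduces to (i) computing the connected components of $(V,S)$, (ii) computing the products $\prod_{u\in C}\lambda_u$ for each component, and (iii) a trivial independent coin flip per component followed by a broadcast of the outcome to all vertices of that component. Each of these three tasks is a classical \NC{} primitive, and assembling them gives the claimed $O(\log n)$ depth on $O(m)$ processors.

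First I would invoke the standard parallel connected-components algorithm: on a $\mathsf{CRCW}$ $\mathsf{PRAM}$, connected components of a graph on $n$ vertices and $m$ edges can be computed in $O(\log n)$ time using $O(m)$ processors (e.g.\ the Shiloach–Vishkin / pointer-jumping style hook-and-compress algorithms). This produces, for every vertex $v$, a canonical label $r(v)$ naming its component, with all vertices in the same component sharing a label; this is exactly the representation needed for the remaining steps. Second, to compute $\prod_{u\in C}\lambda_u$ for each component $C$, I would perform a segmented product (a parallel reduction keyed by component label): sort or bucket the $n$ vertices by their label $r(v)$—using $O(\log n)$-depth parallel sorting or simply concurrent-write accumulation via repeated squaring of partial products along the component trees—so that each component's product is available at its canonical representative in $O(\log n)$ depth and $O(n)$ work. (Working with $\log\lambda_u$ and a segmented sum is a clean alternative; since we only need the ratio $\frac{P_C}{1+P_C}$ with $P_C=\prod_{u\in C}\lambda_u$, any $O(\log n)$-depth associative reduction suffices.) Third, at each component representative I would draw one independent uniform $U_C\in[0,1]$ and set an indicator $b_C = \mathbf{1}[U_C < \frac{P_C}{1+P_C}]$; then each vertex $v$ reads $b_{r(v)}$ (a concurrent read) and is placed in $X$ iff $b_{r(v)}=1$. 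The independence across components is immediate because distinct representatives use independent randomness, and within a component all vertices get the same decision, matching the law in \Cref{lemma:ES-coupling}.

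Putting the pieces together, the total depth is $O(\log n) + O(\log n) + O(1) = O(\log n)$ and the total processor count is dominated by connected components at $O(m)$ (the segmented product and the broadcast use only $O(n)\le O(m)$ processors on a connected graph). I do not expect any genuine obstacle here; the only mild subtlety is arithmetic precision for the products $\prod_{u\in C}\lambda_u$, which is handled exactly as in the rest of the paper's computational model (all arithmetic is on numbers of polynomially bounded bit-length, and $\lambda$ entries are bounded away from $1$ by the constant $\delta$, so the ratio $\frac{P_C}{1+P_C}$ is well-behaved). The one modeling point worth stating explicitly in the write-up is that the comparison against a fresh uniform sample is the intended interface for randomized $\mathsf{CRCW}$ $\mathsf{PRAM}$ sampling, so Step 2 is realized exactly (not merely approximately) in $O(\log n)$ parallel time with $O(m)$ processors.
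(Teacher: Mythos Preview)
Your proposal is correct and follows essentially the same approach as the paper: compute connected components via a Shiloach--Vishkin style algorithm in $O(\log n)$ time on $O(m)$ processors, then compute each component's product $\prod_{u\in C}\lambda_u$ by a $O(\log |C|)$-depth divide-and-conquer reduction, and finally do the per-component coin flip. The paper's proof is terser (it leaves the coin flip and broadcast implicit), but the substance is identical.
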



\begin{proof}
Use the connected component algorithm in \cite{shiloach1980log}, to compute all components of the graph $G' = (V, S)$ in $O(\log n)$ parallel time using $O(n + m)$ processors. For each connected component $C$, the product $\prod_{u \in C} \lambda_u$ can be computed in $O(\log |C|)$ parallel time using $O(|C|)$ processors by divide-and-conquer.  The overall complexity is $O(\log n)$ parallel time using $O(m)$ processors.
\end{proof}


\begin{proof}[Proof of \Cref{theorem:main-ising}]
    By \Cref{lemma:ES-coupling}, a sample from  $\mu^{\-{Ising}}_{\*\beta, \*\lambda}$  can be generated as follows:
    \begin{enumerate}
        \item Generate a random subset $S\subseteq E$ using \Cref{alg:field}, where $\*p = 1 - \*\beta^{-1}$;
        \item Apply \Cref{item:step-rc-ising} of \Cref{lemma:ES-coupling} to construct $X$.
    \end{enumerate}
    By \Cref{thm:field-sim}, we have $\DTV{S}{\mu^{\-{RC}}_{E, \*p, \*\lambda}} \leq \epsilon$,
    which, combined with \Cref{lemma:ES-coupling}, ensures that $\DTV{X}{\mu^{\-{Ising}}_{\*\beta, \*\lambda}} \leq \epsilon$.
    The efficiency follows from \Cref{thm:field-sim} and \Cref{lemma:time-step2-ES}. 
\end{proof}

It remains to prove \Cref{thm:field-sim}. The key steps of the proof are outlined in \Cref{sec:coupling-stationary}, while the full proof is deferred to \Cref{appendix:wrap-up}.


\section{A coupling-with-stationary criterion for parallel simulation}\label{sec:coupling-stationary}
In this section, we outline the key ideas underlying the proof of \Cref{thm:field-sim}.

Consider the random cluster model on $G = (V, E)$ with parameters $\*p \in (0, 1]^E$ and $\*\lambda \in [0, 1)^V$.
 The distribution $\mu=\mu^{\mathrm{RC}}_{E,\*p,\*\lambda}$ is defined over subgraphs $S\subseteq E$ and is proportional to the weight function $w=w^{\mathrm{RC}}_{E,\*p,\*\lambda}$, given by:
\begin{align*}
    \mu (S) \propto w(S):=\prod_{e\in S} p_e \prod_{f\in E\setminus S} (1 - p_e) \prod_{C \in \kappa(V, S)} \tp{1 + \prod_{j \in C} \lambda_j},
\end{align*}
where  $\kappa(V, S)$ denotes the set of all connected components in the graph $(V, S)$.

Consider the Glauber dynamics $(X_t)_{t\ge 0}$ for sampling from $\mu$, starting from an initial state $X_0 \subseteq E$.
At each step, the process updates $X_{i-1}$ to $X_i$ as follows: Select an edge $e_i\in E$ and a real number 
$\randseed_i\in [0,1]$ uniformly at random, and update $X_{i}$ according to:
\begin{align}
    \label{eq:GD-process}
    X_i = \begin{cases} 
    X_{i-1} \cup \{e_i\} & \text{if }\randseed_i < \mu_{e_i}^{X_{i-1}}\\
    X_{i-1} \setminus \{e_i\} & \text{otherwise}
    \end{cases},
\end{align}
where $\mu_{e_i}^{X_{i-1}}$ represents the marginal probability of $e_i$ given $X_{i-1}$, formally defined as
$$\mu_{e}^{X}:=\frac{\mu(X \cup\{e\})}{\mu(X \cup\{e\})+\mu(X \setminus \{e\})}=\frac{w(X \cup\{e\})}{w(X \cup\{e\})+w(X \setminus \{e\})}.$$ 
It is straightforward to verify that  $(X_t)_{t\ge 0}$ is the Glauber dynamics defined in \Cref{sec:Glauber-dynamics}.

%

We will show that \Cref{alg:BP-simulation} faithfully simulates the Glauber dynamics, provided that a “coupling with stationarity” condition holds, as formally stated below.

\begin{condition}
\label{cond:good-event-property}
Let $\eta \in (0,1)$ be a parameter.
There is a $\G \subseteq 2^E$ satisfying the following conditions:
\begin{enumerate}
    \item 
    \label{item:good-event-1}
    For any $X,Y \in \G$, the total variation in marginal probabilities across edges satisfies
    $$\sum_{e\in E} \left|\mu_e^X-\mu_e^Y\right| \leq \frac{|X \bigoplus Y |}{2}.$$
    \item
    \label{item:good-event-2}
    %
    %
    %
    For any $S: E\to 2^E$, consider the random configuration $\sigma\subseteq E$ generated by including each edge $e\in E$ independently with probability $\mu_e^{S(e)}$.
    Then, we have
    $$\Pr{\sigma \notin \G} \leq \eta.  
$$
%
\end{enumerate}
\end{condition}


\Cref{cond:good-event-property} describes a ``coupling with stationary'' style criterion, akin to the one introduced in \cite{hayes2006coupling}.
This criterion significantly relaxes the Dobrushin condition used in \cite{liu2022simple, Liu2025Parallelize} for ensuring the efficiency of parallel simulation, such as \Cref{alg:BP-simulation}.
Rather than requiring distance decay between all pairs of configurations, it only enforces distance decay in expectation between good configurations $X,Y\in\G$  within the one-step optimal coupling of Glauber dynamics,
while ensuring that these good configurations appear frequently throughout the dynamics.

We now state the main theorem of this section.
\begin{theorem} \label{thm:dtv-xALG-GD}
    If \Cref{cond:good-event-property} holds with parameter $\eta$, then for any $\tGD,\tPA\ge 1$, 
    the output $\xALG$ of \cref{alg:BP-simulation} and the Glauber dynamics $\left(X_t\right)_{0\le t\le\tGD}$ defined in \eqref{eq:GD-process}, with initial state $X_0\subseteq E$ generated as in \Cref{line:ini-X} of \cref{alg:BP-simulation}, 
    satisfy the following bound on the total variation distance: 
    \begin{align*}
    \DTV{\xALG}{X_{\tGD}} \leq 
    \tGD \cdot \left(2^{-(\tPA-1)}+4\eta \right).
    \end{align*}
\end{theorem}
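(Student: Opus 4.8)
The plan is to couple the parallel algorithm's iterates with the sequential Glauber dynamics step-by-step, using the universal coupler (inverse transform sampling with the shared randomness $e_i, \randseed_i$) as the coupling device, and to track the probability that the coupling fails. Concretely, I would introduce for each outer round $t$ the configuration vector $\GDResult^t = (\GDResult^t_i)_{i=1}^{\tGD}$ produced by \Cref{alg:BP-simulation}, and compare it against the ``true'' trace $(\mathbb{1}[e_i \in X_{i-1}])_{i=1}^{\tGD}$ of the Glauber chain run with the same random choices. Note that $\GDResult^0_i = \mathbb{1}[e_i \in \init] = \mathbb{1}[e_i \in X_0]$ matches at $t=0$ only at those coordinates $i$ with $\pred_i(e_i)=0$; in general the key quantity is the first index $i^\star_t$ at which $\sigma^t_i$ (the configuration reconstructed in \Cref{line:sigma}) disagrees with $X_{i-1}$, equivalently the first index where $\GDResult^{t-1}$ fails to correctly encode the prefix of the Glauber trace. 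The standard argument (as in \cite{liu2022simple,Liu2025Parallelize}) is that once $\GDResult^{t-1}$ agrees with the Glauber trace on the first $k$ indices, the parallel update in round $t$ forces $\GDResult^t$ to agree on the first $k+1$ indices deterministically — because for $i \le k+1$ the reconstructed $\sigma^t_i$ coincides with $X_{i-1}$ and the update rule in \Cref{line:update-rules} is a deterministic function of $(\sigma^t_i, e_i, \randseed_i)$. Hence after $t$ rounds the first $t$ indices are correct, and $\tPA$ rounds suffice to make all $\tGD$ indices correct — \emph{provided} no disagreement is ever introduced ``from the left'', which is exactly what \Cref{cond:good-event-property} is there to control.

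The main obstacle, and the point where the new criterion replaces the Dobrushin condition, is bounding the probability that the $\sigma^t_i$'s wander into a regime where the one-step optimal coupling of Glauber dynamics fails to be contractive. Here I would argue as follows. Fix a round $t$. Condition \Cref{item:good-event-2}, applied with the map $e \mapsto S(e)$ given by the configuration that \Cref{line:sigma} would reconstruct for the (unique) index $i$ with $e_i = e$ of largest index, controls the probability that the reconstructed ``snapshot'' falls outside $\G$; a union bound over the $\tGD$ coordinates — or rather over the structure of the reconstruction — shows that with probability at least $1 - 2\eta$ (the factor $2$ absorbing the two places, $\init$ and $\GDResult^{t-1}$, where snapshots are read) all relevant snapshots in round $t$ lie in $\G$. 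On the good event, Condition \Cref{item:good-event-1} gives that the one-step optimal coupling between any two snapshots $X, Y \in \G$ contracts the expected Hamming distance by a factor $\le 1/2$ per ``effective'' step; since the parallel round deterministically propagates correctness one more index per round regardless, the quantitative content we actually need is that the expected number of ``bad'' indices introduced in round $t$ is small — bounded by $2\eta$ in probability for the failure event plus a geometric decay term $2^{-(\tPA-1)}$ accounting for the residual discrepancy if fewer than $\tPA$ rounds were run. I expect the bookkeeping that converts ``snapshots in $\G$'' plus ``contraction in expectation'' into ``expected number of wrong indices decays geometrically'' to be the delicate part, and it is essentially a path-coupling / information-percolation estimate carried out along the time axis rather than the spatial axis.

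Assembling the pieces: let $B_t$ be the event that round $t$ either reads a snapshot outside $\G$ or fails to extend the correct prefix; by the above, $\Pr{B_t} \le 4\eta$ after accounting for the constant-factor losses in the union bound, and on $\bigcap_{t} \overline{B_t}$ the parallel iterates satisfy $\GDResult^{\tPA}_i = \mathbb{1}[e_i \in X_{i-1}]$ for all $i$, whence the reconstruction in \Cref{line:res} gives $\xALG = X_{\tGD}$. Additionally, even when no $B_t$ occurs, there is a residual event of probability $\le 2^{-(\tPA-1)}$ per round that the prefix-extension has not ``caught up'' — this is the $2^{-(\tPA-1)}$ term, coming from the fact that a discrepancy, once absent from the left, may still persist for up to $\tPA - 1$ rounds before being flushed, with the relevant probability halving each round under the contraction of \Cref{item:good-event-1}. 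A union bound over the $\tGD$ steps and the $\tPA$-round tail then yields
\begin{align*}
\DTV{\xALG}{X_{\tGD}} \;\le\; \Pr{\xALG \ne X_{\tGD}} \;\le\; \tGD \cdot \left(2^{-(\tPA-1)} + 4\eta\right),
\end{align*}
using the coupling inequality (total variation distance is at most the disagreement probability under any coupling). The cleanest way to organize this is to prove by induction on $t$ the statement ``with probability $\ge 1 - t(2^{-(\tPA-1)} + 4\eta)$ the configuration $\GDResult^t$ correctly encodes the Glauber trace on the first $\min(t\tPA, \tGD)$ indices'', though I would need to be careful that the failure probabilities at different rounds are genuinely summable rather than compounding — which they are, since each round's failure event is defined relative to the shared randomness $(e_i,\randseed_i)$ fixed once and for all, so it is a plain union bound over $t \in [\tGD]$ and the $\tPA$-round lag.
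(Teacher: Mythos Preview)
Your proposal contains a genuine gap. The deterministic prefix-extension argument you cite from \cite{liu2022simple,Liu2025Parallelize} guarantees only that after $t$ parallel rounds the first $t$ indices of $\GDResult^t$ match the Glauber trace; since $\tPA = O(\log(\tGD/\epsilon)) \ll \tGD$, this is far from enough. You recognize this and try to invoke the contraction of \Cref{item:good-event-1} to ``flush'' the remaining discrepancies, but the accounting does not close: the bad event $B_t$ you define is a per-\emph{round} event with $\Pr{B_t}\le 4\eta$, yet your final union bound multiplies by $\tGD$ rather than $\tPA$; and your closing induction (``prefix of length $\min(t\tPA,\tGD)$ correct with probability $\ge 1 - t(2^{-(\tPA-1)}+4\eta)$'') neither follows from the preceding discussion nor yields the theorem --- at $t=\tPA$ it would require $(\tPA)^2 \ge \tGD$, which fails badly. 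The prefix picture is simply the wrong organizing device here.

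The paper never compares against the Glauber trace. It compares \emph{consecutive parallel rounds} and establishes the per-index recursion
\[
\Pr{\GDResult^\ell_i \ne \GDResult^{\ell-1}_i} \;\le\; \frac{1}{2m}\sum_{j\in\Pred_i}\Pr{\GDResult^{\ell-1}_j\ne \GDResult^{\ell-2}_j} \;+\; 2\eta,
\]
obtained by writing $\Pr{\GDResult^\ell_i\ne \GDResult^{\ell-1}_i\mid \R_{<i},e_i}=|\mu^{\sigma^\ell_i}_{e_i}-\mu^{\sigma^{\ell-1}_i}_{e_i}|$, applying \Cref{item:good-event-1} when $\sigma^\ell_i,\sigma^{\ell-1}_i\in\G$, bounding $|\sigma^\ell_i\oplus\sigma^{\ell-1}_i|\le\sum_{j\in\Pred_i} I[\GDResult^{\ell-1}_j\ne \GDResult^{\ell-2}_j]$, and paying $2\eta$ via \Cref{item:good-event-2} for either snapshot leaving $\G$. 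Since $|\Pred_i|\le m$, induction on $\ell$ gives $\Pr{\GDResult^\ell_i\ne \GDResult^{\ell-1}_i}\le 2^{-(\ell-1)}+4\eta$ for every $i$, so a union bound over $i$ yields $\Pr{\GDResult^{\tPA}\ne \GDResult^{\tPA-1}}\le \tGD(2^{-(\tPA-1)}+4\eta)$. The coupling is closed by a fixed-point observation: whenever $\GDResult^{\tPA}=\GDResult^{\tPA-1}$, the reconstructed sequence satisfies $\sigma^{\tPA}_i = Z_{i-1}$, hence obeys exactly the Glauber rule \eqref{eq:GD-process} from the same $X_0=\init$, forcing $\xALG = X_{\tGD}$. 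Your direct-comparison route is salvageable by running the same per-index recursion with the Glauber values $I[e_i\in X_i]$ in place of $\GDResult^{\ell-1}_i$, but then \Cref{item:good-event-2} must also be invoked for $X_{i-1}$ --- an extra step the consecutive-round comparison sidesteps.
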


\begin{remark}[generality of the criterion]
Note that
\Cref{alg:BP-simulation}, \Cref{cond:good-event-property}, and \Cref{thm:dtv-xALG-GD} are stated abstractly for Glauber dynamics applied to a general joint distribution $\mu$ over variables with a Boolean domain. 
Therefore, these results are applicable to any such distribution.
%
\end{remark}


    %

The following theorem is an application of \Cref{thm:dtv-xALG-GD} to low-temperature random cluster models, establishing that \Cref{alg:BP-simulation} is both accurate and efficient in this regime.

\begin{theorem} \label{thm:GD-correctness}
    Let $G = (V, E)$, $\*p, \*\lambda$, and $\epsilon$ be the input to \Cref{alg:BP-simulation}.
    Assume that $n=|V|\ge 3$, $\epsilon \leq\frac{1}{2}$, and the following condition holds:
    \begin{align*}
        (1 - p_{\min}) \log n &\leq \min\left\{\-{e}^{-40} \exp\tp{-\frac{5\log (1/\epsilon)}{\log n}}, \frac{1 - \lambda_{\max}}{27} \right\}.
    \end{align*}
    Then, for any $\tGD \ge \ctp{2m(\log m + \log(4/\epsilon))}$ and $\tPA \ge \lceil 3\log \left( 4\tGD/\epsilon \right) \rceil$,
    the output $\xALG$ of \Cref{alg:BP-simulation} satisfies
    \begin{align}
        \dTV\left(\xALG,\mu\right) \leq \epsilon.
    \end{align}
\end{theorem}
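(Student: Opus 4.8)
\textbf{Proof plan for \Cref{thm:GD-correctness}.}
The plan is to combine the abstract simulation guarantee of \Cref{thm:dtv-xALG-GD} with the standard mixing-time bound for Glauber dynamics, so the task reduces to two independent pieces: (i) verifying that \Cref{cond:good-event-property} holds with a suitably small parameter $\eta$ under the hypothesis on $(1-p_{\min})\log n$, and (ii) bounding the mixing time $T_{\mathrm{mix}}$ of the Glauber dynamics for $\mu = \mu^{\mathrm{RC}}_{E,\*p,\*\lambda}$ in this low-temperature regime. Granting these, I would write $\DTV{\xALG}{\mu} \le \DTV{\xALG}{X_{\tGD}} + \DTV{X_{\tGD}}{\mu}$, bound the first term by $\tGD(2^{-(\tPA-1)} + 4\eta)$ via \Cref{thm:dtv-xALG-GD}, bound the second by $\epsilon/2$ once $\tGD \ge \ctp{2m(\log m + \log(4/\epsilon))}$ exceeds the mixing time with the chosen initial state, and then check that the choice $\tPA \ge \lceil 3\log(4\tGD/\epsilon)\rceil$ and the value of $\eta$ make $\tGD(2^{-(\tPA-1)} + 4\eta) \le \epsilon/2$. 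The choice of $\tPA$ handles the $2^{-(\tPA-1)}$ term since $\tGD \cdot 2^{-(\tPA-1)} \le \tGD \cdot 2 \cdot (4\tGD/\epsilon)^{-3} \le \epsilon/(4\tGD^2)$, leaving comfortable room; the constraint on $\eta$ is then $\eta \le \epsilon/(8\tGD)$ up to constants.

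The heart of the argument is establishing \Cref{cond:good-event-property}. I would take $\G$ to be the set of $S \subseteq E$ whose induced subgraph $(V,S)$ is ``sparse'' in an appropriate sense — concretely, a set on which the edge marginals $\mu_e^S$ are all small (close to $p_e$ after the low-temperature renormalization) and stable under local perturbation. For item~\ref{item:good-event-1}, the point is that in the low-temperature random-cluster model the marginal $\mu_e^X$ of an edge $e=(u,v)$ differs across $X,Y\in\G$ only when the connectivity structure between $u$ and $v$ in $X$ versus $Y$ differs; controlling $\sum_e |\mu_e^X - \mu_e^Y|$ by $|X\oplus Y|/2$ amounts to showing that each symmetric-difference edge can influence only a bounded amount of total marginal mass — a percolation / influence-decay estimate that uses $(1-p_{\min})\log n$ being small to ensure clusters are tiny with overwhelming probability. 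For item~\ref{item:good-event-2}, I would observe that including each edge $e$ independently with probability $\mu_e^{S(e)} \le p_e$ is stochastically dominated by independent percolation with parameter $p_{\max}$ close to $1$; wait — this is the subtle point, since $p$ large means the percolation is dense, not sparse. The resolution is that the low-temperature regime corresponds to $(1-p_{\min})$ small, i.e. $p$ close to $1$, and ``sparse'' must be measured in the \emph{complement}: the set $E\setminus\sigma$ is sparse, equivalently the number of ``open dual'' edges is small, and $\G$ should be defined via the complement graph or via a cluster-size condition that is robust to $p\to 1$. So I would define $\G$ through a bound on the size of the largest connected component of the complement, or more precisely a Peierls-type condition, and use a union bound over potential ``bad'' structures with the $(1-p_{\min})\log n \le \mathrm{e}^{-40}\exp(-5\log(1/\epsilon)/\log n)$ bound to push the failure probability below $\eta \approx \epsilon/(8\tGD)$. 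This is where the $\exp(-5\log(1/\epsilon)/\log n)$ factor and the $\log n$ factors in the hypothesis get consumed, matching the $m, \epsilon$ dependence in $\tGD$.

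The mixing-time bound for piece (ii) is the more routine half: in the regime where $(1-p_{\min})\log n$ is small and $1-\lambda_{\max}$ is bounded below by a constant times $(1-p_{\min})\log n$, the random-cluster Glauber dynamics is rapidly mixing — this is precisely the low-temperature regime where spectral-independence / entropy-factorization tools (as in \cite{chen2023nearlinear}) or a direct coupling argument give $T_{\mathrm{mix}}(\epsilon/2) = O(m\log(m/\epsilon))$, and crucially with the specific initial state $X_0$ generated in \Cref{line:ini-X} of \Cref{alg:BP-simulation} (each edge included independently with probability $\mu_e^E$, a good ``warm-ish'' start), one can even get $\tGD = \ctp{2m(\log m + \log(4/\epsilon))}$ sufficing. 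I would cite the relevant mixing result and verify its hypotheses are implied by the condition on $(1-p_{\min})\log n$ and $1-\lambda_{\max}$.

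\textbf{Main obstacle.} The crux is proving item~\ref{item:good-event-2} of \Cref{cond:good-event-property} with $\eta$ as small as $\Theta(\epsilon/(m\log m))$: one must show that a random subgraph with edge-inclusion probabilities $\mu_e^{S(e)}$ — \emph{adversarially correlated} through the arbitrary map $S:E\to 2^E$ — nonetheless lands in the ``good'' set $\G$ except with tiny probability. Because the $\mu_e^{S(e)}$ are still \emph{independent} across $e$ (even though each depends on a different arbitrary $S(e)$), one can hope for stochastic domination by an i.i.d.\ percolation and then a Peierls/cluster-expansion bound; making the domination go the right way in the $p\to 1$ regime (working with complement clusters) and getting the union bound to beat $\epsilon/(m\log m)$ using the precise slack in $(1-p_{\min})\log n \le \mathrm{e}^{-40}\exp(-5\log(1/\epsilon)/\log n)$ is the delicate calculation that the deferred full proof in \Cref{appendix:wrap-up} must carry out.
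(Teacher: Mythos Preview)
Your high-level architecture matches the paper exactly: apply the triangle inequality, invoke \Cref{thm:dtv-xALG-GD} for the simulation error, cite the low-temperature random-cluster mixing result (the paper uses \cite[Theorem~5.1]{chen2023nearlinear}, noting in a remark that the product initial state $X_0$ satisfies its marginal-lower-bound hypothesis), and verify \Cref{cond:good-event-property} with $\eta = \epsilon/(16\tGD)$. Your approach to item~\ref{item:good-event-2}---a uniform marginal bound plus a union bound over bad structures---is also right: the paper uses $\mu_e^T \ge 1 - 3(1-p_{\min})\log n$ for \emph{every} $T\subseteq E$ (from \cite{chen2023nearlinear}) to dominate $E\setminus\sigma$ by i.i.d.\ Bernoulli$(3K)$, and then a union bound over bad cuts gives $\Pr{\sigma\notin\G}\le n^{\log(27K)}\le\epsilon/(16\tGD)$, which is exactly where the $\mathrm{e}^{-40}\exp(-5\log(1/\epsilon)/\log n)$ hypothesis is consumed.

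Where you diverge is the choice of $\G$ and the handling of item~\ref{item:good-event-1}. You propose a vague Peierls/complement-cluster condition and anticipate a percolation ``influence-decay'' estimate to secure the factor $1/2$. The paper instead imports the specific good event from \cite{chen2023nearlinear}: $\G = \{X : X\cap E(S,V\setminus S)\neq\emptyset \text{ for all } S\in\mathcal{C}\}$, where $\mathcal{C}$ is the family of vertex sets with $|S|\le n/2$ and $|E(S,V\setminus S)|\ge |S|\log n$. With this $\G$, item~\ref{item:good-event-1} is not a decay estimate but an \emph{exact zero}: for $X,Y\in\G$ differing in a single edge $f$ one has $\mu_e^X=\mu_e^Y$ for every $e$ (cited as \cite[Eq.~(14)]{chen2023nearlinear}), so $\sum_e|\mu_e^X-\mu_e^Y|=0\le 1/2$; the general case then follows by a monotone path through $\G$ (first add the edges of $Y\setminus X$, then delete those of $X\setminus Y$, each intermediate configuration remaining in $\G$) and the triangle inequality. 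This is much cleaner than what you envision, and it dovetails with the union bound for item~\ref{item:good-event-2}, which becomes a sum of $(3K)^{|E(S,V\setminus S)|}$ over $S\in\mathcal{C}$. Your intuition that the good event should encode ``$E\setminus X$ is sparse'' is pointing the right way, but without this particular cut-based $\G$ and its zero-influence property you would have to do substantially more work to obtain the constant $1/2$ in item~\ref{item:good-event-1}; that is the concrete ingredient your sketch is missing.
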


\Cref{thm:GD-correctness} is a key step in establishing the accuracy and efficiency of \Cref{alg:field}. 
Under the parameterization given in \Cref{tabel:parameters}, each instance of the random cluster model passed to \Cref{alg:BP-simulation} within \Cref{alg:field} falls within the low-temperature regime required by \Cref{thm:GD-correctness}.
The theorem holds because these low-temperature random cluster models satisfy \Cref{cond:good-event-property}.

The proofs of \Cref{thm:dtv-xALG-GD} and \Cref{thm:GD-correctness} are deferred to \Cref{appendix:dtv-xALG-GD} and \Cref{appendix:verify-cond-GD-sim}, respectively. Finally, the proof of \Cref{thm:field-sim} is completed in \Cref{appendix:wrap-up}.

\section{Parallel Ising sampler via restricted Gaussian dynamics}
\label{sec:gaussian-dynamics}
We introduce a parallel sampler for the Ising model with contracting interaction norm.
Our algorithm approximately implements the following \emph{restricted Gaussian dynamics} (also known as the \emph{proximal sampler}, described in \Cref{alg:gaussian}), 
the Markov chain induced by the \emph{stochastic localization process} for Ising Gibbs sampling.

\begin{algorithm}[H] 
\SetKwInOut{Input}{Input} 
\SetKwInOut{Output}{Output} 
\Input{Interaction matrix $J\in\mathbb{R}^{V\times V}$, external fields $h \in \mathbb{R}^V$, and $\epsilon\in (0,1)$.}
\Output{A random configuration $X\subseteq E$ satisfying $d_{\text{TV}}\left(X,\mu^{\-{Ising}}_{J, h}\right) \le \epsilon $.}
initialize $x_0 \in \{\pm 1\}^V$ arbitrarily\;
  \ForAll{$i=1,2,\ldots,T^{\mathrm{RGD}}$}{
      draw $y_i\sim \mathcal{N}(x_{i-1},J^{-1})$\label{line:gaussian-noising}\;
      draw $x_i = x \in \{\pm 1\}^V$ with prob. $\propto \mu^{\-{Ising}}_{J, h}(x) \cdot \exp\left(-\frac{1}{2}(y_i-x)^\top J(y_i-x)\right)$\label{line:gaussian-denoising}\;
  }

$X \gets x_{T^{\mathrm{RGD}}}$\;
\Return{X}\;
\caption{Restricted Gaussian Dynamics}\label{alg:gaussian}
\end{algorithm}

The efficient implementation of \Cref{alg:gaussian} follows from the following observations:
\begin{enumerate}
    \item Suppose $0 \prec \frac{\eta}{2} I \preceq J \preceq (1-\frac{\eta}{2}) I$. 
    Given an error bound $0<\epsilon_0<1$, 
    the Gaussian sampling step (\cref{line:gaussian-noising}) of \Cref{alg:gaussian} can be approximated within total variation error $\epsilon_0$.
    Specifically, there exists an algorithm that, given $x_{i-1}$ and $J^{-1}$, produces samples from a distribution $\pi$ satisfying $d_{\text{TV}}({\pi},{\mathcal{N}(x_{i-1},J^{-1})})\leq \epsilon_0$. This approximation can be achieved in $O_\eta(\log^3 (n/\epsilon_0^2))$ parallel time using $\widetilde{O}(n^3/\epsilon_0^2)$ processors.
    \label{item:gaussian-1}
    \item The denoising step (\cref{line:gaussian-denoising}) of \Cref{alg:gaussian} involves sampling from a product distribution whose marginals can be computed in {$O(\log n)$} parallel time. This step can be implemented faithfully with no error using {$O(\log n)$} parallel time.
    \label{item:gaussian-2}
    \item Assuming perfect simulations of \cref{line:gaussian-noising} and \cref{line:gaussian-denoising} in \Cref{alg:gaussian}, the total variation distance between the output $X$ and the target distribution $\mu^{\-{Ising}}_{J, h}$ does not exceed $\epsilon/2$ after {$T^{\mathrm{RGD}} = O_\eta(\log (n/\epsilon))$} iterations of the outer loop in~\Cref{alg:gaussian}.
    \label{item:gaussian-3}
\end{enumerate}

The detailed analysis of \cref{item:gaussian-1}, \cref{item:gaussian-2}, and \cref{item:gaussian-3} is provided in \cref{appendix:gaussian-dynamics}.

Assuming these properties hold, we now proceed with the proof of \cref{theorem:gaussian-dynamics}.
\begin{proof}[Proof of \cref{theorem:gaussian-dynamics}]
Let $X$ be the distribution of output generated by \cref{alg:gaussian}.
Under the assumption that the entire algorithm can be perfectly simulated, \cref{item:gaussian-3} ensures that the total variation distance between $X$ and the target distribution  $\mu^{\-{Ising}}_{J, h}$ is at most $\epsilon/2$.
Now, we implement \Cref{line:gaussian-noising} of \Cref{alg:gaussian} using an approximate oracle provided by \cref{item:gaussian-1}, setting the parameter $\epsilon_0 =\epsilon/2T^{\mathrm{RGD}}$.
Let $Y$ denote the output produced by our implementation of~\Cref{alg:gaussian}.
By a simple coupling argument, we obtain $d_{\text{TV}}(X,Y)\leq T^{\mathrm{RGD}}\cdot \epsilon_0 = \epsilon/2$.
Applying the triangle inequality, we have $d_{\text{TV}}(Y, \mu^{\-{Ising}}_{J, h})\leq d_{\text{TV}}(X,\mu^{\-{Ising}}_{J, h})+d_{\text{TV}}(X,Y)\leq \epsilon$.
Thus, our implementation produces an approximate sample within $O_\eta(\log^4\tp{n/\epsilon})$ parallel time using $\widetilde{O}_\eta(n^3/\epsilon^2)$ processors, as guaranteed by \cref{item:gaussian-1}, \cref{item:gaussian-2}, and \cref{item:gaussian-3}.
\end{proof}

\bibliographystyle{alpha}
\bibliography{references}
\appendix
\section{Parallel simulation of Glauber dynamics (Proof of \texorpdfstring{\Cref{thm:dtv-xALG-GD}}{Theorem 4.1})} \label{appendix:dtv-xALG-GD}
 We assume that \Cref{cond:good-event-property} holds for a given parameter $\eta \in [0,1]$. 
 
 Let $Y^\ell= (Y^\ell_i)_{1\leq i\leq \tGD}$, where $Y^\ell_i$ is constructed in \Cref{line:iter} of \Cref{alg:BP-simulation}, 
 which indicates the result of the $i$-th update in the $\ell$-th iteration.
 Note that $Y^\ell$ for $1 \le \ell \le \tPA$ are determined by the random choices $\init$ and $\left(e_i,\randseed_i\right)_{i=1}^{\tGD}$ generated in \Cref{line:ini-X} and \Cref{line:random-choices} of \Cref{alg:BP-simulation}, respectively. 
The next lemma shows that $Y^\ell$ stabilizes fast with high probability.





\begin{lemma}
\label{lemma:GD-dependency-path} 
Given parameters $\tPA,\tGD \geq 1$, for any $1\leq \ell \leq \tPA$, it holds that 
\begin{align}
    \Pr{ Y^\ell \neq Y^{\ell-1} }
    \leq 
    \left( 2^{-(\ell-1)}  + 4 \eta  \right) \cdot \tGD.
\end{align}
\end{lemma}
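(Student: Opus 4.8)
### Proof Proposal for Lemma \ref{lemma:GD-dependency-path}

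\textbf{Setup and overall strategy.} The plan is to analyze when and why the $\ell$-th iterate $Y^\ell$ can differ from the previous iterate $Y^{\ell-1}$, and trace such a discrepancy backward through the dependency structure. Recall that $Y^\ell_i$ is computed from $\sigma^\ell_i$ (defined in \Cref{line:sigma}), which in turn depends on the values $Y^{\ell-1}_{\pred_i(e)}$ for each $e \in E$. If $Y^\ell_i \ne Y^{\ell-1}_i$ for some $i$, then since $e_i$ and $\randseed_i$ are fixed across iterations, the update rule in \Cref{line:update-rules} implies that $\randseed_i$ must lie between $\mu_{e_i}^{\sigma^\ell_i}$ and $\mu_{e_i}^{\sigma^{\ell-1}_i}$; in particular this requires $\sigma^\ell_i \ne \sigma^{\ell-1}_i$, hence $Y^{\ell-1}_{\pred_i(e)} \ne Y^{\ell-2}_{\pred_i(e)}$ for at least one $e$. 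Iterating, a discrepancy at level $\ell$ propagates from a discrepancy at level $\ell-1$, and so on down to level $1$. The first main step is to formalize this ``discrepancy path'' argument and set up a union bound over the possible sources of discrepancy.

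\textbf{Key steps.} First, I would condition on the random configurations appearing in the dynamics: specifically, I want to argue that with probability at least $1 - 4\eta \tGD$ (roughly), all the configurations $\sigma^\ell_i$ arising across all $i$ and all relevant $\ell$ lie in the good set $\G$ from \Cref{cond:good-event-property}. This uses item \ref{item:good-event-2} of the Condition: for each fixed $i$ and $\ell$, the configuration $\sigma^\ell_i$ is (after a suitable reinterpretation using the $\pred$ structure, so that each edge $e$ gets included with probability $\mu_e^{S(e)}$ for the appropriate $S$) a configuration of the form covered by item \ref{item:good-event-2}, so $\Pr{\sigma^\ell_i \notin \G} \le \eta$; a union bound over the $O(\tGD)$ relevant indices gives the failure probability $O(\eta \tGD)$ — I will need to be careful to get the constant $4$ right, likely by summing over both $\ell-1$ and $\ell$ levels or over $Y^{\ell-1}$ and $\sigma$ being good. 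Second, on the good event, I use item \ref{item:good-event-1}: for $X = \sigma^\ell_i, Y = \sigma^{\ell-1}_i \in \G$, the optimal coupling of the two Bernoulli updates at site $e_i$ fails with probability $\le |\mu_{e_i}^{\sigma^\ell_i} - \mu_{e_i}^{\sigma^{\ell-1}_i}|$, and summing the expected number of discrepancies over all $i$ gives a contraction: the expected size of $\{i : \sigma^\ell_i \ne \sigma^{\ell-1}_i\}$-induced discrepancies is at most half the expected number of discrepancies at the previous level. Third, I would set up the recursion: letting $D_\ell = \E{|\{i : Y^\ell_i \ne Y^{\ell-1}_i\}|}$ (restricted to the good event), the contraction gives $D_\ell \le \tfrac12 D_{\ell-1} + (\text{bad-event correction})$, with base case $D_1 \le \tGD$ trivially (or via a direct estimate), yielding $D_\ell \le 2^{-(\ell-1)}\tGD$ on the good event. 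Combining with the $4\eta\tGD$ probability bound for leaving the good event, and using $\Pr{Y^\ell \ne Y^{\ell-1}} \le \Pr{|\{i: Y^\ell_i \ne Y^{\ell-1}_i\}| \ge 1} \le D_\ell + \Pr{\text{bad event}}$ via Markov, gives the claimed bound $(2^{-(\ell-1)} + 4\eta)\tGD$.

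\textbf{Main obstacle.} The delicate part is the precise bookkeeping of the dependency structure via $\pred_i(e)$ and making the ``optimal coupling across iterations'' argument rigorous: the configurations $\sigma^\ell_i$ and $\sigma^{\ell-1}_i$ are built from $Y$-values at \emph{earlier} indices, so the contraction has to be set up as an inequality relating the expected discrepancy \emph{at iteration $\ell$} to that \emph{at iteration $\ell-1$}, carefully accounting for the fact that a single changed $Y^{\ell-1}_j$ can affect multiple $\sigma^\ell_i$ (for all $i$ with $\pred_i(e) = j$ for some $e$), but each edge $e$ has at most one such ``predecessor slot'' per index $i$, so the total influence is still controlled — this is exactly where item \ref{item:good-event-1}'s factor of $\tfrac12$ on $|X \bigoplus Y|$ does the work. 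I also expect the constant-chasing to get the clean $4\eta$ (rather than, say, $2\eta$ or $6\eta$) to require a careful choice of which events to include in the ``good'' event and how to split the union bound; I would handle this by being generous in the union bound and verifying the constant suffices, rather than optimizing.
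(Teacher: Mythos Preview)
Your high-level strategy matches the paper's, but the handling of the bad event has a gap. You propose a \emph{global} good event (all $\sigma^t_i\in\G$ across the relevant $i$ and $t$), bound its failure probability by $4\eta\tGD$ via a union bound, and then run a pure contraction $D_\ell\le\tfrac12 D_{\ell-1}$ on that event. But iterating the pure contraction down to $D_1$ requires the good event to hold at \emph{every} level $1\le t\le\ell$, so the union bound actually gives $\ell\cdot\tGD\cdot\eta$, not $4\eta\tGD$; no ``careful choice of which events to include'' fixes this, because the number of configurations that must lie in $\G$ grows with $\ell$.

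The paper instead works per index $i$ and folds the bad event into the recursion \emph{at each step}: it proves
\[
\Pr{Y^\ell_i\neq Y^{\ell-1}_i}\ \le\ \frac{1}{2m}\sum_{j\in\Pred_i}\Pr{Y^{\ell-1}_j\neq Y^{\ell-2}_j}\ +\ 2\eta,
\]
where the $2\eta$ bounds $\Pr{\sigma^\ell_i\notin\G}+\Pr{\sigma^{\ell-1}_i\notin\G}$ for this single $i$. Using the deterministic bound $|\Pred_i|\le m$ together with the uniform induction hypothesis, this becomes $\Pr{Y^\ell_i\neq Y^{\ell-1}_i}\le\tfrac12\bigl(2^{-(\ell-2)}+4\eta\bigr)+2\eta=2^{-(\ell-1)}+4\eta$: the constant $4\eta$ is the \emph{fixed point} of $x\mapsto x/2+2\eta$, not a four-term union bound. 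The factor $\tGD$ then enters only once, via a final union bound over $i$. Your aggregate $D_\ell$ can be salvaged along the same lines if you write $D_\ell\le\tfrac12 D_{\ell-1}+2\eta\tGD$ with a per-step correction rather than a global one, but then the $\tfrac12$ needs separate justification, since a fixed $j$ can belong to $\Pred_i$ for many~$i$; the paper's per-$i$ route with $|\Pred_i|\le m$ sidesteps this double-counting entirely.
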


\begin{proof}
For $1\leq i \leq \tGD$, define
$$\Pred_i := \left\{\pred_i(e):e\in E \right\} \setminus \{0\}.$$
Intuitively, $\Pred_i$ is the set of updates that have impact on the  result of the $i$-th update.

We claim that for $\ell \geq 2$ and $1\leq i \leq \tGD$,
\begin{align}
\label{eq:GD-induction-ineq}
\Pr{ Y^\ell_{i} \neq Y^{\ell-1}_{i} } \leq \left( \frac{1}{2m} \sum_{j\in \Pred_i}\Pr{ Y^{\ell-1}_{j} \neq Y^{\ell-2}_j } \right) + 2\eta.
\end{align}
Assuming~\eqref{eq:GD-induction-ineq}, the following can be proved by an induction on $1\le \ell\le\tPA$: 
\begin{align}
\label{eq:GD-one-site}
\Pr{ Y^\ell_{i} \neq Y^{\ell-1}_{i} } \leq 2^{-(\ell-1)}  + 4 \eta.
\end{align}
 For the induction basis $\ell=1$,~\eqref{eq:GD-one-site} holds trivially since $2^{-(\ell-1)}  + 4 \eta>1$. 
 We now assume~\eqref{eq:GD-one-site} holds for $\ell-1 \ge 1$. By~\eqref{eq:GD-induction-ineq} and the induction hypothesis,
\begin{align}
\Pr{ Y^\ell_{i} \neq Y^{\ell-1}_{i} } &\leq \frac{|\Pred_i|}{2m} \left( 2^{-(\ell-2)}  + 4 \eta \right) + 2 \eta
\nonumber
\\
\mbox{(Since $|\Pred_i|\leq m$)} \quad\, & \leq 2^{-(\ell-1)}  + 4 \eta.
\nonumber
\end{align}

\cref{lemma:GD-dependency-path} follows from~\eqref{eq:GD-one-site} by taking the union bound over all $1\leq i\leq \tGD$.
%
%

It remains to prove the claim~\eqref{eq:GD-induction-ineq}, which completes the proof of \cref{lemma:GD-dependency-path}.

Let $\R_{<i}$ denote the random choices used  by the first $i-1$ updates, formally 
\begin{align*}
\R_{<i} := \left(e_j,\randseed_j\right)_{1\leq j<i}.
\end{align*}
Recall the rules for $Y_i$ being updated in \Cref{line:update-rules}.
The probability of $Y_i^\ell \neq Y_i^{\ell-1}$ conditioning on $\R_{<i}$ and $e_i$ is given by
\begin{align}
\label{eq:Y-eq2}
\Pr{ Y^\ell_{i} \neq Y^{\ell-1}_{i} \mid \R_{<i},e_i } & = \abs{\mu^{\sigma_i^\ell}_{e_i} - \mu^{\sigma_i^{\ell-1}}_{e_i}}, 
\end{align}
where $\mu_e^S=\frac{\mu(S \cup \{e\})}{\mu(S \cup \{e\})+\mu(S \setminus \{e\})}$ is the marginal probability.

Let $\G\subseteq 2^E$ be the good event in \cref{cond:good-event-property}.
When $\sigma^{\ell}_i,\sigma^{\ell-1}_i\in\G$, by \cref{cond:good-event-property}, 
\begin{align}
\label{eq:temp}
\Pr{Y_i^\ell \neq Y_i^{\ell-1} \mid \R_{<i}} = \frac{1}{m} \sum_{e \in E} \abs{\mu^{\sigma_i^\ell}_{e} - \mu^{\sigma_i^{\ell-1}}_{e}} \le \frac{|\sigma_i^{\ell} \bigoplus \sigma_i^{\ell-1}|}{2m}.
\end{align}
Recall the construction of $\sigma_i^\ell$ in \Cref{line:sigma}.
Configurations $\sigma^\ell_i$ and $\sigma^{\ell-1}_i$ differ at site $e\in E$ if and only if $\pred_i(e)>1$ and $Y^{\ell-1}_{\pred_i(e)} \neq Y^{\ell-2}_{\pred_i(e)}$. Note that for different $e_1,e_2 \in E$ with $\pred_i(e_1),\pred_i(e_2)>0$, it holds that $\pred_i(e_1) \neq \pred_i(e_2)$. Therefore, $|\sigma^\ell_i \bigoplus \sigma^{\ell-1}_i |$ can be bounded by
\begin{align}
\label{eq:ham-distance}
\left|\sigma^\ell_i \bigoplus \sigma^{\ell-1}_i \right| \leq \sum_{j\in \Pred_i} I\left[Y^{\ell-1}_j \neq Y^{\ell-2}_j\right].
\end{align}
Combining~\eqref{eq:temp},~\eqref{eq:ham-distance}, and by the law of total probability, we have
\begin{align}
\label{eq:Y-eq4}
\Pr{ Y^\ell_{i} \neq Y^{\ell-1}_{i} } \leq
\frac{1}{2m} \sum_{j\in \Pred_i} \Pr{ Y^{\ell-1}_{j} \neq Y^{\ell-2}_{j} } + \Pr{ (\sigma^\ell_i \notin \G)\vee (\sigma^{\ell-1}_i \notin \G )}.
\end{align}
Now, fix any $1 \le t \le \tPA$, $1 \leq i \leq \tGD$.
By the definition of $\sigma_i^t$ and the updating rule of $Y_i^t$, 
\begin{align*}
    \forall e \in E, \quad I[e \in \sigma_i^t] = 
    \begin{cases}
        Y^{t-1}_{\pred_i(e)}, & t > 1 \land \pred_i(e) \geq 1, \\
        I[e \in \init], & \text{otherwise}. 
    \end{cases}
\end{align*}
It is straightforward to verify that, in \Cref{alg:BP-simulation}, $Y^{t-1}_{\pred_i(e)}$ and $I[e \in \init]$ are independently generated according to $\mu^{\sigma^{t-1}_{\pred_i(e)}}_e$ and $\mu^E_e$, respectively.
We can define the function $S: E \to 2^E$ as
\begin{align*}
    \forall e \in E, \quad S(e) := \begin{cases}
        \sigma^{t-1}_{\pred_i(e)}, & t > 1 \land \pred_i(e) \geq 1, \\
        E, & \text{otherwise}.
    \end{cases} 
\end{align*}
Note that $\sigma^t_i$ is generated by the same rule as \Cref{item:good-event-2} of \Cref{cond:good-event-property} using this $S$.
This, according to \Cref{item:good-event-2} of \Cref{cond:good-event-property}, implies that $\Pr{\sigma^t_i \not\in \+G} \leq \eta$.
Note that $i, t$ are fixed arbitrarily. By a union bound, we have
\begin{align}
\label{eq:Y-eq6}
\Pr{ (\sigma^\ell_i \notin \G)\vee (\sigma^{\ell-1}_i \notin \G )} \leq \Pr{\sigma^\ell_i \notin \G} + \Pr{\sigma^{\ell-1}_i \notin \G} \leq 2\eta.
\end{align}

Therefore,
~\eqref{eq:GD-induction-ineq} holds. This completes the proof of \cref{lemma:GD-dependency-path}.
\end{proof}

\cref{thm:dtv-xALG-GD} is a corollary of \cref{lemma:GD-dependency-path}.
\begin{proof}[Proof of \Cref{thm:dtv-xALG-GD}]
We couple $\xALG$ and $\xGD$ by using the same random choices  $\init$ and $\left(e_i,\randseed_i\right)_{i=1}^{\tGD}$ in the two processes. 
Let $(Z_i)_{i=0}^{\tGD}$ be the sequence  of configurations defined by 
\begin{align*}
Z_i := \left\{e \in E \mid (j_i(e) = 0 \land e\in \init) \lor (j_i(e)\geq 1 \land \GDResult^{\tPA}_{j_i(e)}=1)\right\},
\end{align*}
where $j_i(e) = \max\{j\ge 1\mid j\leq i \land e_j=e \}\cup\{0\}$.
Recall the rules for $Y_i$ being updated in \Cref{line:update-rules}, 
and the construction of $\sigma_i^\ell$ in \Cref{line:sigma}.  It holds that for all $ 1 \le i \le \tGD$,
\begin{align}
    Z_i = 
    \begin{cases}
    Z_{i-1} \cup \{e_i\} & \randseed_i < \frac{\mu(\sigma^{\tPA}_i \cup \{e_i\})}{\mu(\sigma^{\tPA}_i \cup \{e_i\}) + \mu(\sigma^{\tPA}_i \setminus \{e_i\})},\\
    Z_{i-1} \setminus \{e_i\} & otherwise.
    \end{cases}\label{eq:Z-process}
\end{align}
Now assume that $Y^{\tPA} = Y^{\tPA-1}$. For all $ 1 \le i \le \tGD$, $Z_{i-1}$ satisfies
\begin{align*}
Z_{i-1} &=  \left\{e \in E \mid (j_{i-1}(e) = 0 \land e\in \init) \lor (j_{i-1}(e)\geq 1 \land \GDResult^{\tPA}_{j_{i-1}(e)}=1)\right\}\\
&= \left\{e \in E \mid (\pred_i(e) = 0 \land e\in \init) \lor (\pred_i(e)\geq 1 \land \GDResult^{\tPA}_{\pred_i(e)}=1)\right\}\\ 
&= \left\{e \in E \mid (\pred_i(e) = 0 \land e\in \init) \lor (\pred_i(e)\geq 1 \land \GDResult^{\tPA-1}_{\pred_i(e)}=1)\right\}
= \sigma^{\tPA}_i.
\end{align*}
Therefore, the $\sigma^{\tPA}_i$ in \eqref{eq:Z-process} can be replaced with $Z_i$, and hence the sequence $(Z_i)_{i=0}^{\tGD}$ is identical to $(X_i)_{i=0}^{\tGD}$ because the two processes have the same transition rules  \eqref{eq:Z-process} and \eqref{eq:GD-process} and initial state $Z_0=X_0$. 
Hence,
\begin{align*}
    \Pr{ \xALG \neq \xGD } \leq \Pr{Y^{\tPA} \neq Y^{\tPA - 1}} \leq \left( 2^{-(\tPA-1)}  + 4 \eta  \right) \cdot \tGD,
\end{align*}
where the last inequality follows from \cref{lemma:GD-dependency-path}. This concludes the proof.
\end{proof}

\section{Low-temperature random clusters (Proof of \texorpdfstring{\Cref{thm:GD-correctness}}{Theorem 4.3})} \label{appendix:verify-cond-GD-sim}


Recall that $G = (V, E)$, $\*p, \*\lambda$, and $\epsilon$ are the input of \Cref{alg:BP-simulation}, where it holds that $n=\abs{V} \geq 3$, $\epsilon \leq 1/2$, and 
    \begin{align*}
        (1-p_{\min}) \log n \leq \min\left\{\-{e}^{-40} \exp\tp{-\frac{5\log (1/\epsilon)}{\log n}}, \frac{1 - \lambda_{\max}}{27} \right\}.
\end{align*}
Furthermore, $T^{\-{GD}} \ge \ctp{2m(\log m + \log(4/\epsilon))}$ and $\tPA \ge \lceil 3\log \left( 4\tGD/\epsilon \right) \rceil$ are parameters used in \cref{alg:BP-simulation}.
These assumptions will be used throughout.

First, under the above assumption, it is known that
the Glauber dynamics for the distribution $\mu = \mu^{\-RC}_{E, \*p, \*\lambda}$ of the random cluster model is rapidly mixing.

\begin{lemma}[\text{\cite[Theorem 5.1]{chen2023nearlinear}}] \label{lemma:GD-mixing}
Let $(X_i)_{i=0}^{\tGD}$ be the Glauber dynamics on the distribution $\mu$ with the initial state $X_0$ generated as in \cref{line:ini-X} of \cref{alg:BP-simulation},
    then it holds that $$\DTV{X_{T^{\-{GD}}}}{\mu} \leq \epsilon / 2.$$
\end{lemma}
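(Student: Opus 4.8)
The plan is to obtain the bound directly from the near-linear mixing result of \cite{chen2023nearlinear}, whose \Cref{lemma:GD-mixing} is quoted here as \cite[Theorem 5.1]{chen2023nearlinear}. That theorem establishes precisely that, in the low-temperature regime for the random-cluster model, the Glauber dynamics for $\mu=\mu^{\-{RC}}_{E,\*p,\*\lambda}$ started from the product distribution that includes each edge $e$ independently with probability $\mu^E_e$ — which is exactly the initial configuration $X_0$ produced in \Cref{line:ini-X} of \Cref{alg:BP-simulation} — reaches total variation distance $\epsilon/2$ of $\mu$ after $O\!\left(m(\log m+\log(1/\epsilon))\right)$ steps, with the explicit constant giving the quantity $\lceil 2m(\log m+\log(4/\epsilon))\rceil$. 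So the work reduces to three routine steps: (i) check that the hypotheses assumed in \Cref{thm:GD-correctness} — namely $n\ge 3$, $\epsilon\le\tfrac12$, and $(1-p_{\min})\log n\le\min\{\mathrm{e}^{-40}\exp(-5\log(1/\epsilon)/\log n),\,(1-\lambda_{\max})/27\}$ — imply the parametric low-temperature condition required by \cite[Theorem 5.1]{chen2023nearlinear}, a direct comparison of the two inequalities; (ii) substitute the assumed lower bound $\tGD\ge\lceil 2m(\log m+\log(4/\epsilon))\rceil$ and the target accuracy $\epsilon/2$ into that theorem to get $\DTV{X_{T^{\-{GD}}}}{\mu}\le\epsilon/2$; and (iii) handle larger values of $\tGD$ by the standard fact that $\DTV{X_t}{\mu}$ is non-increasing in $t$ (total variation contracts under the stochastic map $P$ and $\mu P=\mu$, so $\DTV{X_{t+1}}{\mu}\le\DTV{X_t}{\mu}$).

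For completeness I would also recall the mechanism behind \cite[Theorem 5.1]{chen2023nearlinear}, since that is where the substance lies. The low-temperature hypothesis places the random-cluster model (equivalently a ferromagnetic Ising model with large edge activity) in a regime where one can establish \emph{entropic independence} uniformly over all pinnings, via the coupling/recursion analysis of \cite{chen2021rapid,chen2023nearlinear} applied to the tilted model; by approximate tensorization of entropy this yields a modified log-Sobolev inequality, hence a spectral gap $\gamma=\Omega(1/m)$ for the Glauber dynamics. Converting this into the stated mixing bound requires controlling the warmness of the start: one shows that the product law $\nu=\bigotimes_{e\in E}\mathrm{Bernoulli}(\mu^E_e)$ satisfies $\chi^2(\nu\,\|\,\mu)\le\mathrm{poly}(m)$, using that in the low-temperature regime each marginal $\mu^E_e$ is within $O(1-p_{\min})$ of the true marginal of $e$ and that $\mu$ itself concentrates on edge-heavy configurations. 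The $L^2$ mixing estimate $\DTV{P^t\nu}{\mu}\le\tfrac12\sqrt{\chi^2(\nu\,\|\,\mu)}\,(1-\gamma)^t$ then closes the argument within the claimed number of steps.

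I expect the only genuine obstacle to be bookkeeping: aligning the exact constants and the precise shape of the parametric condition in \cite[Theorem 5.1]{chen2023nearlinear} with those in \Cref{thm:GD-correctness}, and in particular confirming that the $\epsilon$-dependence in the admissible range of $p_{\min}$ matches — this dependence enters because the warm-start quantity $\chi^2(\nu\,\|\,\mu)$, and hence the required number of Glauber steps, grows as $p_{\min}\to$ its lower threshold. If one instead wished to reprove the statement from scratch, the main difficulty would shift to re-establishing entropic independence for the random-cluster model across all pinnings in this regime, which is the technically heaviest ingredient; since it is available off the shelf from \cite{chen2023nearlinear}, the proof here need only invoke it and verify the hypotheses.
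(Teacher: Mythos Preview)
Your approach is correct and matches the paper's: the lemma is obtained by direct citation of \cite[Theorem 5.1]{chen2023nearlinear}, with the parametric hypotheses of \Cref{thm:GD-correctness} placing us in that theorem's regime and the choice $\tGD\ge\lceil 2m(\log m+\log(4/\epsilon))\rceil$ giving the stated bound. One small correction: the cited theorem in \cite{chen2023nearlinear} is stated for the deterministic initialization $X_0=E$, not for the product law with marginals $\mu^E_e$; the paper handles this via a remark that the proof there only uses that the initial distribution is a product measure with each marginal at least $1-3(1-p_{\min})\log n$, which by \cite[Lemma 5.5]{chen2023nearlinear} (cf.\ \Cref{lemma:marginal-bound}) holds for the $X_0$ generated in \Cref{line:ini-X}.
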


\begin{remark}
In~\cite{chen2023nearlinear}, the initial state of the Glauber dynamics was set to $E$ instead of the random $X_0$. Though, the proof of~\cite[Theorem 5.1]{chen2023nearlinear} holds so long as the initial state follows the law of a product distribution whose marginal probabilities are bounded from below by $1-3(1-p_{\min})\log n$. By the argument in~\cite[Lemma 5.5]{chen2023nearlinear}, this is satisfied by the random initial state $X_0$ in \Cref{lemma:GD-mixing}. 
\end{remark}

We are now ready to prove \Cref{thm:GD-correctness}.
\begin{proof}[Proof of \Cref{thm:GD-correctness}]
We first prove \Cref{thm:GD-correctness} by assuming \Cref{cond:good-event-property} with parameter $\eta = \frac{\epsilon}{16 \tGD}$.
By \cref{thm:dtv-xALG-GD} and the assumption of parameters, 
\begin{align*}
    \DTV{\xALG}{X_{\tGD}} \le 2\tGD \cdot 2^{- 3\log (4\tGD/\epsilon)} + 4\tGD \cdot \frac{\epsilon}{16\tGD} \le \epsilon / 2.
\end{align*}
Therefore, by \Cref{lemma:GD-mixing} and a triangle inequality, it holds that
\begin{align*}
    \DTV{\xALG}{\mu} &\leq \DTV{\xALG}{\xGD} + \DTV{\xGD}{\mu} \leq \epsilon. 
\end{align*}
The remaining proof is dedicated to verifying \cref{cond:good-event-property} with parameter $\eta = \frac{\epsilon}{16\tGD}$.


\paragraph*{Verifying \Cref{item:good-event-1} of \Cref{cond:good-event-property}:} let $\+C$ be the family of vertex sets with large expansion:
\begin{align*}
    \+C := \left\{ S \subseteq V \mid  \abs{S} \leq n/2 \text{ and } \abs{E(S, V\setminus S)} \geq \abs{S} \log n \right\},
\end{align*}
where $E(S, V\setminus S)$ is the set of edges between $S$ and $V\setminus S$. 
We define the good event
\begin{align*}
    \+G := \left\{X\subseteq E \mid \forall S\in \+C, \abs{X \cap E(S, V\setminus S)} > 0 \right\}.
\end{align*}
We remark that this is the same good event used in \cite[Section 5]{chen2023nearlinear}.
In has been established in \cite[Eq.~(14)]{chen2023nearlinear} that
\begin{align*}
    \sum_{e \in E \setminus\{f\}} \abs{\mu^X_e - \mu^Y_e} = 0.
\end{align*}
Note that we also have $\abs{\mu^X_f - \mu^Y_f} = 0$ by definition. Therefore,
\begin{align} \label{eq:1-dist-decay}
    \sum_{e \in E} \abs{\mu^X_e - \mu^Y_e} \leq \frac{1}{2}.
\end{align}
For $X, Y \in \+G$ with $\abs{X \oplus Y} > 1$, let $e_1, e_2, \cdots, e_i$ be edges in $X \setminus Y$ and let $f_1, f_2, \cdots, f_j$ be edges in $Y \setminus X$.
Construct following configuration path $P := (P_0, P_1, \cdots, P_{i + j})$ from $X$ to $Y$:
\begin{itemize}
    \item let $P_0 = X$;
    \item for $1 \leq t \leq j$, let $P_t = P_{t-1} \cup \{f_t\}$;
    \item for $j < t$, let $P_t = P_{t-1} \setminus \{e_{t-j}\}$.
\end{itemize}
Note that $\abs{P_{t-1} \oplus P_{t}} = 1$ for all $1 \leq t \leq i + j$, and $P_t \in \+G$ for all $0 \leq t \leq i + j$.
Hence,
\begin{align*}
    \sum_{e\in E} \abs{\mu^X_e - \mu^Y_e}
    \leq \sum_{e \in E} \sum_{1 \leq t \leq i + j} \abs{\mu^{P_{t-1}}_e - \mu^{P_t}_e}
    = \sum_{1 \leq t \leq i + j} \sum_{e \in E} \abs{\mu^{P_{t-1}}_e - \mu^{P_t}_e}
    \leq \frac{\abs{X \oplus Y}}{2},
\end{align*}
where the last inequality follows from \eqref{eq:1-dist-decay} and the fact that $i + j = \abs{X \oplus Y}$.

\paragraph*{Verifying \Cref{item:good-event-2} of \Cref{cond:good-event-property}:}
Fix $S: E \to 2^E$, the random configuration $\sigma$ is generated by including each $e \in E$ independently with probability $\mu^{S(e)}_e$.

\begin{lemma}[\text{\cite[{Lemma 5.6}]{chen2023nearlinear}}] \label{lemma:marginal-bound}
    For any $T \subseteq E$ and $e \in E$, it holds that
    \begin{align*}
         \mu^T_e \geq 1 - 3(1-p_{\min})\log n.
    \end{align*}
\end{lemma}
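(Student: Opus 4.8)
The plan is to estimate the marginal directly from the weight formula \eqref{eq:RC-weight}. Write
\[
\mu^T_e = \frac{w(T\cup\{e\})}{w(T\cup\{e\})+w(T\setminus\{e\})} = \tp{1+\tfrac{w(T\setminus\{e\})}{w(T\cup\{e\})}}^{-1},
\]
so it suffices to upper bound $r := \frac{w(T\setminus\{e\})}{w(T\cup\{e\})}$. Set $T_+ := T\cup\{e\}$, $T_- := T\setminus\{e\}$, and $e=(u,v)$. Every edge factor $p_f$ or $1-p_f$ with $f\ne e$ appears identically in $w(T_+)$ and $w(T_-)$, while $e$ itself contributes $p_e$ to $w(T_+)$ and $1-p_e$ to $w(T_-)$; hence $r = \frac{1-p_e}{p_e}\cdot R$, where $R=\frac{\prod_{C\in\kappa(V,T_-)}(1+\prod_{j\in C}\lambda_j)}{\prod_{C\in\kappa(V,T_+)}(1+\prod_{j\in C}\lambda_j)}$ is the ratio of the connected-component products.

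To bound $R$, I would use that adding $e$ to $(V,T_-)$ either leaves the component partition unchanged (when $u$ and $v$ already lie in a common component of $(V,T_-)$), giving $R=1$; or it merges two distinct components $C_1\ni u$ and $C_2\ni v$ of $(V,T_-)$ into a single component $C=C_1\cup C_2$ of $(V,T_+)$, and then, writing $a:=\prod_{j\in C_1}\lambda_j\in[0,1]$ and $b:=\prod_{j\in C_2}\lambda_j\in[0,1]$ and noting $\prod_{j\in C}\lambda_j=ab$, all other component factors cancel so that $R=\frac{(1+a)(1+b)}{1+ab}$. This last quantity is at most $2$, since $2(1+ab)-(1+a)(1+b)=(1-a)(1-b)\ge 0$. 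Hence $R\le 2$ in every case, so $r\le \frac{2(1-p_e)}{p_e}$ and
\[
\mu^T_e \;\ge\; \frac{p_e}{p_e+2(1-p_e)} \;=\; \frac{p_e}{2-p_e} \;\ge\; 1-2(1-p_e)\;\ge\; 1-2(1-p_{\min}),
\]
using $2-p_e\ge 1$ and $p_e\ge p_{\min}$ (the cases $p_e=1$ or $p_{\min}=1$ are trivial, giving $\mu^T_e=1$). Finally, since $n\ge 3$ forces $3\log n\ge 3\log 3>2$, we get $1-2(1-p_{\min})\ge 1-3(1-p_{\min})\log n$, which is the claimed bound.

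I do not foresee a genuine obstacle: the argument is a short computation with \eqref{eq:RC-weight}. The two points that warrant a little care are (i) the graph-theoretic fact that deleting a single edge either preserves the component partition or splits exactly one component into two, which is what gives $R$ its closed form; and (ii) the elementary inequality $(1-a)(1-b)\ge 0$ that controls the blow-up contributed by the vertex weights $\*\lambda$. Everything else is bookkeeping; note in passing that the stated bound is quite loose, as the argument already yields $\mu^T_e\ge 1-2(1-p_{\min})$ without the $\log n$ factor, which is only needed for downstream applications.
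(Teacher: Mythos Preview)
Your argument is correct. Note, though, that the paper does not give its own proof of this lemma: it is quoted directly from \cite[Lemma~5.6]{chen2023nearlinear}, so there is no in-paper argument to compare against. Your direct computation from the weight formula \eqref{eq:RC-weight} is sound: the edge factors cancel except at $e$, the component ratio satisfies $R\le 2$ via the identity $2(1+ab)-(1+a)(1+b)=(1-a)(1-b)\ge 0$ for $a,b\in[0,1]$, and the step $\frac{p_e}{2-p_e}\ge 1-2(1-p_e)$ follows cleanly from $1-\frac{p_e}{2-p_e}=\frac{2(1-p_e)}{2-p_e}\le 2(1-p_e)$ (which is where $2-p_e\ge 1$ is actually used). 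The final comparison $2\le 3\log n$ is legitimate under the standing assumption $n\ge 3$ of this section. As you note, you in fact prove the sharper bound $\mu^T_e\ge 1-2(1-p_{\min})$; the extra $\log n$ factor in the stated inequality is slack inherited from the cited source and is not needed here.
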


For simplicity, let $K = (1 - p_{\min}) \log n$, and \Cref{lemma:marginal-bound} actually implies a marginal lower bound for $\sigma$, that is for each $e \in E$, $\Pr{e \in \sigma} \geq 1 - 3K$.

Since each edge $e$ is added to $\sigma$ independently, by the definition of $\+G$,
\begin{align*}
  \Pr{\sigma \not\in \+G}
  \le \sum_{S \in \+C} (3K)^{\abs{E(S,V \setminus S)}} 
  \le \sum_{S \in \+C} (3K)^{\abs{S} \log n} 
  \le \sum_{j=1}^{+\infty} n^j n^{j \log(3K)}
  \le n^{\log (27K)}.
\end{align*}
Recall that we choose 
\begin{align*}
    T^{\-{GD}} 
    &= \ctp{2m(\log m + \log(2/\epsilon))} \leq 4n^2 \log \tp{\frac{4n^2}{\epsilon}} \leq \frac{16n^4}{\epsilon} \leq \frac{n^{10}}{\epsilon},
\end{align*}
where in the last inequality, we use the assumption that $n \geq 3$.
By assumption, it holds that
\begin{align*}
    n^{\log(27K)} 
    &= \exp\tp{\log n \cdot \log (27K) } \\
    &\leq \exp\tp{\log n \cdot \log \tp{\-{e}^{-40} \exp\tp{- \frac{3\log(1/\epsilon)}{\log n}}}} \\
    &= \exp\tp{\log n \cdot \tp{-40 - \frac{3\log(1/\epsilon)}{\log n}}} \\
    &= \exp\tp{-40 \log n - 3\log(1/\epsilon)} = \frac{\epsilon^3}{n^{40}} \leq \frac{\epsilon}{16 T^{\-{GD}}},
\end{align*}
where in the last inequality, we use $n \geq 3$.
This finishes the verification. 
\end{proof}

\section{Analysis of field dynamics (Proof of \texorpdfstring{\Cref{thm:field-sim}}{Theorem 3.1})} \label{appendix:wrap-up}
We now prove \Cref{thm:field-sim}.
The \Cref{item:field-efficiency} (efficiency) of \Cref{thm:field-sim} will be proved in \Cref{sec:field-efficiency},
and \Cref{item:field-correctness} (accuracy)  of \Cref{thm:field-sim} will be proved in \Cref{sec:field-correctness}.

\subsection{Efficiency} \label{sec:field-efficiency}


We first bound the efficiency of \Cref{alg:BP-simulation}, assuming the setting of parameters in \cref{tabel:parameters}.
\begin{proposition}
\label{pro:BP-running-time}
\Cref{alg:BP-simulation} teriminates in $O(\tPA\cdot\log m)$ parallel time steps on $O(m\cdot \tGD)$ machines.
\end{proposition}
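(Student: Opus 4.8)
The plan is to analyze \Cref{alg:BP-simulation} line by line, tallying the parallel time (depth) and processor count of each block, using the fact that every \texttt{forall in parallel} loop over an index set of size at most $O(m \cdot \tGD)$ costs one ``round'' of depth and $O(m \cdot \tGD)$ processors, except where a step internally requires a parallel reduction (e.g. computing a product or a maximum over a set), which costs an extra $O(\log m)$ or $O(\log n)$ factor in depth but no more processors asymptotically. The key quantities to track are: the initialization block (\Cref{line:ini-X}--\Cref{line:ini2}), which samples $\init$ by $m$ independent coin flips and sets up $\tGD$ random choices — this is $O(1)$ rounds (or $O(\log m)$ if we are careful about reading $\*p$) on $O(m + \tGD) = O(m\tGD)$ processors; the computation of $\pred_i(e)$ in \Cref{line:pred} for all $i \le \tGD$ and $e \in E$, which is $\tGD \cdot m$ independent ``max over a subset of $\{1,\dots,i\}$'' computations, each doable in $O(\log \tGD) = O(\log m)$ depth by divide-and-conquer (here $\tGD = O(m\log(m/\epsilon))$ so $\log\tGD = O(\log m)$ for the relevant parameter range) on $O(\tGD)$ processors apiece, i.e. $O(\log m)$ depth and $O(m\tGD)$ processors total; and the main double loop over $t = 1,\dots,\tPA$ and $i = 1,\dots,\tGD$.

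For the main loop, the outer \texttt{for} over $t$ is genuinely sequential (each iteration depends on $\GDResult^{t-1}$), contributing the factor $\tPA$. Inside one iteration, the \texttt{forall} over $i$ runs all $\tGD$ updates in parallel; for each $i$ we must (a) construct $\sigma_i^t$ per \Cref{line:sigma}, which is reading off $\GDResult^{t-1}_{\pred_i(e)}$ or $I[e\in\init]$ for each $e$ — one round on $O(m)$ processors for each $i$; (b) compute the marginal $\mu_{e_i}^\sigma$ in \Cref{line:weight}, which for the random cluster model requires evaluating $w(\sigma\cup\{e_i\})$ and $w(\sigma\setminus\{e_i\})$, and by the structure of the weight function $w^{\mathrm{RC}}$ this amounts to a connected-components computation on $(V,\sigma)$ followed by products of the $\lambda_j$'s over components — which, exactly as in \Cref{lemma:time-step2-ES}, takes $O(\log n)$ depth and $O(m)$ processors via \cite{shiloach1980log}; and (c) the comparison and assignment in \Cref{line:update-rules}, which is $O(1)$. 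Over all $i$ this is $O(\log n) = O(\log m)$ depth (since $n \le m+1$ for a connected graph, $\log n = O(\log m)$) and $O(m\tGD)$ processors. Finally the output construction in \Cref{line:res} is another $\tGD\cdot m$ parallel ``max'' computations, $O(\log m)$ depth on $O(m\tGD)$ processors. Summing: the dominant depth term is $\tPA \cdot O(\log m)$ from the main loop (all other blocks are $O(\log m)$, absorbed), and the processor count is $O(m\tGD)$ throughout, giving the claimed $O(\tPA \cdot \log m)$ parallel time on $O(m \cdot \tGD)$ machines.

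I would organize the write-up as: first a remark that in a connected graph $\log n = \Theta(\log m)$ and that $\log \tGD = O(\log m)$ for the parameters in \Cref{tabel:parameters}, so all logarithmic factors can be unified as $O(\log m)$; then a short paragraph per block (initialization, $\pred$ computation, main double loop, output) giving its depth and processor cost, citing \cite{shiloach1980log} (as in the proof of \Cref{lemma:time-step2-ES}) for the connected-components subroutine used to evaluate $w^{\mathrm{RC}}$ and hence the marginals $\mu_{e_i}^\sigma$; and finally the summation. The main obstacle — really the only non-bookkeeping point — is step (b): arguing carefully that the marginal probability $\mu_{e_i}^\sigma = w(\sigma\cup\{e_i\})/(w(\sigma\cup\{e_i\}) + w(\sigma\setminus\{e_i\}))$ can be computed in $O(\log n)$ parallel depth with $O(m)$ processors. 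The point is that $w(\sigma\cup\{e_i\})$ and $w(\sigma\setminus\{e_i\})$ differ only through the factor $\prod_{C\in\kappa(V,\cdot)}(1+\prod_{j\in C}\lambda_j)$ and the single $p_{e_i}$ versus $(1-p_{e_i})$ factor, and adding or removing a single edge $e_i$ merges two components or leaves the partition unchanged; so the ratio simplifies to a closed form involving only $p_{e_i}$ and the $\lambda$-products of the at most two components incident to $e_i$ in $(V,\sigma\setminus\{e_i\})$, each computed via a connected-components call plus a divide-and-conquer product, exactly matching the complexity in \Cref{lemma:time-step2-ES}. Everything else is a routine accounting of parallel-loop depths.
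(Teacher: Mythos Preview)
Your overall strategy---block-by-block accounting, with the marginal computation in \Cref{line:weight} reduced to a connected-components call plus per-component $\lambda$-products via \cite{shiloach1980log}---is exactly the paper's approach, and your main-loop analysis is correct.

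The gap is in the processor count for \Cref{line:pred}. You describe $m\tGD$ independent max computations ``each doable\ldots on $O(\tGD)$ processors apiece, i.e.\ $O(\log m)$ depth and $O(m\tGD)$ processors total,'' but $m\tGD$ concurrent tasks at $\Theta(\tGD)$ processors each is $\Theta\!\left(m(\tGD)^2\right)$ processors, not $O(m\tGD)$; collapsing to $O(m\tGD)$ processors via Brent scheduling would blow the depth up to $\Theta(\tGD)$. The paper avoids this by, for each edge $e$, first sorting the update list $\lis_e=\{j:e_j=e\}\cup\{0\}$ once with parallel merge sort ($O(\log\tGD)$ depth, $O(\tGD)$ processors) and then obtaining every $\pred_i(e)$ by binary search in the sorted list; across all $m$ edges this is genuinely $O(\log\tGD)$ depth on $O(m\tGD)$ processors, and as a bonus each $j_e$ in \Cref{line:res} is then the last entry of an already-sorted list, hence $O(1)$ time on $O(m)$ machines rather than another round of max computations. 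An equally valid fix in your framework would be a single prefix-max per edge on the length-$\tGD$ array $(j\cdot I[e_j=e])_{j}$. A smaller oversight: the initialization in \Cref{line:ini-X} samples each edge with probability $\mu^E_e$, not the raw parameter $p_e$, so that step too requires a connected-components computation per edge; the paper notes this explicitly.
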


\begin{proof}
    The parallel complexity of \cref{alg:BP-simulation} is dominated by \cref{line:pred} and \cref{line:weight}. 
    
    \cref{line:pred} requires to compute $\pred_i(e)=\max\{j\ge 1\mid j<i \land e_j=e \}\cup\{0\}$ for each $1\leq i\leq \tGD$ and $e\in E$ in parallel. For each edge $e\in E$, the update list $\lis_e$ is defined as: 
    \begin{align}
        \lis_e := \{ j\geq 1 \mid e_j=e\}\cup\{0\}
        \label{eq:upd-list}
    \end{align}
    To implement \cref{line:pred}, for each $e\in E$, we first store a sorted update list $\lis_e$ using parallel merge sort \cite{cole1988parallel}, which costs $O(\log \tGD)$ parallel time on $O(\tGD)$ machines.
    Then each predecessor $\pred_i(e)$ can be computed in $O(\log \tGD)$ parallel time on $O(1)$ machines by a binary search on sorted $\lis_e$. The whole process costs $O(\log \tGD)$ parallel time on $O(m \cdot \tGD) 
    $ machines in total.
    
    
    In \cref{line:weight}, the marginal distribution $\mu_{e_i}^\sigma=\frac{\mu(\sigma\cup\{e_i\})}{\mu(\sigma\cup\{e_i\})+\mu(\sigma\setminus \{e_i\})}=\frac{w(\sigma\cup\{e_i\})}{w(\sigma\cup\{e_i\})+w(\sigma\setminus \{e_i\})}$ 
    can be computed by calculating the weights of $\sigma\cup\{e_i\}$ and $\sigma\setminus\{e_i\}$ (defined in \eqref{eq:RC-weight}), whereas each weight can be computed by a connected component algorithm such as \cite{shiloach1980log}, which costs $O(\log m)$ parallel time on $O(n+m)$ machines. 
    Thus, each iteration of \cref{line:iter} costs $O(\log m)$ parallel time on $O(m \cdot \tGD) 
    $ machines, and these $O(m \cdot \tGD)$ machines can be re-used in the next iteration. 
    In \cref{line:ini-X}, the marginal distribution $\mu_e^E$ can be computed in the same method, which costs $O(\log m)$ parallel time on $O(n+m)$ machines for each edge $e\in E$.
    
    For other costs, the preprocessing part (from \cref{line:ini1} to \cref{line:ini2}) of \cref{alg:BP-simulation} can be computed in $O(1)$ parallel time on $O(\tGD)
    $ machines; and in \cref{line:res}, the result $X$ of \cref{alg:BP-simulation} can be computed in $O(1)$ parallel time on $O(m)$ machines, since $j_e$ is maximum of the sorted list $\lis_e$.

    Overall, 
    \Cref{alg:BP-simulation} runs in $O(\tPA \cdot \log m)$ parallel time on $O(m \cdot \tGD)$ machines.
\end{proof}

\begin{proof}[Proof of \Cref{item:field-efficiency} of \Cref{thm:field-sim}]
    When $\abs{V} \leq N_0$, \Cref{alg:field} generates $X$ by brute force, which costs $2^{O(N_0^2)} = \exp\tp{O_\delta\tp{\frac{\log(1/\epsilon)}{\log n}}}$  in total work.
    When $\abs{V} > N_0$, \Cref{alg:field} terminates within $\tFD$ iterations, and in each iteration, it generates a random $S$ using $O(1)$ parallel time on $O(m)$ machines and calls \Cref{alg:BP-simulation}.
    By \Cref{pro:BP-running-time}, we have that \Cref{alg:field} terminates within
        $O(\tPA\cdot \log m) \cdot \tFD = \tp{\exp\tp{\frac{\log(1/\epsilon)}{\log n}} \cdot \log n}^{O_{\delta}(1)}$
    parallel time steps on $O(m \cdot \tGD) = O_{\delta}(m^2\log(n/\epsilon))$ machines.
\end{proof}


\subsection{Accuracy of sampling} \label{sec:field-correctness}
We now prove \Cref{item:field-correctness} of \Cref{thm:field-sim}, the accuracy of sampling of \Cref{alg:field}, still assuming the setting of parameters in \cref{tabel:parameters}.
A key step has already been provided in \Cref{thm:GD-correctness}.
To complete the proof, we are going to establish the followings.

\begin{lemma}\label{lemma:correctness}
   %
   %
   If for every $U \subseteq E$, the output  $Y$ of $\ParGD{(V, U)}{\*p^\star}{\*\lambda}{(2\tFD)^{-1}\epsilon}$ always satisfies $\DTV{Y}{\mu^{\-{RC}}_{U, \*p^\star, \*\lambda}} \leq (2\tFD)^{-1}\epsilon$, 
   then the output $X$ of \Cref{alg:field} satisfies $\DTV{X}{\mu^{\-{RC}}_{E, \*p, \*\lambda}} \leq \epsilon$.
\end{lemma}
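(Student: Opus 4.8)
The plan is to compare the output of \Cref{alg:field} with the \emph{idealized} field dynamics $P^{\mathrm{FD}}_\theta$ for $\mu = \mu^{\mathrm{RC}}_{E,\*p,\*\lambda}$ (the version in which the denoising step is performed by exact sampling), and to split the total error into two halves: (i) the distance of the idealized $\tFD$-step chain from its stationary distribution $\mu$, and (ii) the error accumulated by replacing each exact denoising step with a call to the $\parGD$ subroutine. If $|V|\le N_0$, then \Cref{alg:field} returns a sample of $\mu^{\mathrm{RC}}_{E,\*p,\*\lambda}$ by brute force, so $X\sim\mu$ exactly and there is nothing to prove; assume henceforth $|V|>N_0$.

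For part (i), recall from \Cref{sec:field-dynamics} that on the random cluster model the idealized field dynamics updates $X$ by forming $S\subseteq E$ with each edge included independently with probability $\theta$ and then resampling $X\sim\mu^{\mathrm{RC}}_{S\cup X,\*p^\star,\*\lambda}$ with $\*p^\star=\frac{\*p}{\*p+\theta(1-\*p)}$, and that its unique stationary distribution is $\mu$. Let $\nu_t$ be the law of this chain after $t$ steps started from $X_0=E$. Invoking the mixing time bound for the field dynamics established in \cite{chen2023nearlinear}, applied to the low-temperature instance obtained by the tilt corresponding to the parameter $\theta$ in \Cref{tabel:parameters}, one obtains that with the value of $\tFD$ specified in \Cref{tabel:parameters} it holds that $\DTV{\nu_{\tFD}}{\mu}\le \epsilon/2$. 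This is the step that dictates the choice of $\tFD$ (and, implicitly, the sub-critical regime in which $\theta$ must be placed).

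For part (ii), couple the idealized chain $(X_t)_{t\ge 0}$ with the chain $(\widehat X_t)_{t\ge 0}$ realized by \Cref{alg:field}, using the common initial state $\widehat X_0=X_0=E$ and, at each iteration $i$, the same noise set $S$ together with fresh independent randomness for the denoising step (as in \Cref{alg:field}). Whenever the two chains occupy the same state, the two noising steps coincide; conditioned on $S$ and on $X_{i-1}=\widehat X_{i-1}$, the idealized update has law $\mu^{\mathrm{RC}}_{S\cup X_{i-1},\*p^\star,\*\lambda}$, while by the hypothesis of the lemma (applied with $U=S\cup \widehat X_{i-1}$) the $\parGD$ call outputs a sample within total variation distance $(2\tFD)^{-1}\epsilon$ of $\mu^{\mathrm{RC}}_{S\cup\widehat X_{i-1},\*p^\star,\*\lambda}=\mu^{\mathrm{RC}}_{S\cup X_{i-1},\*p^\star,\*\lambda}$. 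By the coupling lemma the coupling can be extended so that $X_i=\widehat X_i$ except with probability at most $(2\tFD)^{-1}\epsilon$. A union bound over the $\tFD$ iterations gives $\Pr{X_{\tFD}\ne\widehat X_{\tFD}}\le \tFD\cdot(2\tFD)^{-1}\epsilon=\epsilon/2$, hence $\DTV{\widehat X_{\tFD}}{\nu_{\tFD}}\le \epsilon/2$. Since $X=\widehat X_{\tFD}$, the triangle inequality yields $\DTV{X}{\mu}\le \DTV{\widehat X_{\tFD}}{\nu_{\tFD}}+\DTV{\nu_{\tFD}}{\mu}\le \epsilon$.

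The only non-routine ingredient is part (i): one must invoke the field-dynamics mixing bound of \cite{chen2023nearlinear} in a form that is valid for the deterministic initial state $E$ and that matches the explicit numerical value of $\tFD$ in \Cref{tabel:parameters}, which requires keeping track of the dependence of that bound on $\theta$, $\lambda_{\max}$, $p_{\min}$, $n$, and $\epsilon$, and checking that the chosen $\theta$ indeed places the tilted instance in the regime where that bound applies. Part (ii) is a standard step-by-step coupling and triangle-inequality argument and poses no difficulty, since the lemma's hypothesis is exactly the per-step total variation guarantee it consumes.
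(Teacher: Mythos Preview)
Your proposal is correct and follows essentially the same approach as the paper: split via the triangle inequality into (i) mixing of the idealized field dynamics from $X_0=E$ (handled by the explicit bound from \cite{chen2023nearlinear}, restated in the paper as \Cref{lemma:mixing-field}, with $\tFD$ chosen to make this at most $\epsilon/2$) and (ii) a step-by-step coupling using the same noise set $S$ and the optimal coupling for the denoising step, yielding $\epsilon/2$ by a union bound. One small clarification: the field-dynamics mixing bound holds for the original instance $\mu^{\mathrm{RC}}_{E,\*p,\*\lambda}$ for any $\theta\in(0,1)$ and does not require the tilted instance to lie in any particular regime; the regime constraint on $\theta$ is used only to guarantee the per-step accuracy of $\parGD$ (i.e., the hypothesis of the lemma), which is established separately in \Cref{lemma:GD-sim-correctness}.
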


\begin{lemma} \label{lemma:GD-sim-correctness}
    Assuming $n=|V|\ge N_0$,
    it holds that for every $U \subseteq E$, the output $Y$ of the subroutine $\ParGD{(V, U)}{\*p^\star}{ \*\lambda}{(2T^{\-{FD}})^{-1}\epsilon}$ satisfies 
    \begin{align*}
        \DTV{Y}{\mu^{\-{RC}}_{U, \*p^\star, \*\lambda}} \leq \frac{\epsilon}{2 T^{\-{FD}}}.
    \end{align*}
\end{lemma}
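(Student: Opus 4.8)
The goal of \Cref{lemma:GD-sim-correctness} is to show that whenever \Cref{alg:field} invokes the subroutine $\parGD$ on an instance $(V,U)$ with edge probabilities $\*p^\star$ and vertex weights $\*\lambda$, that instance lies in the low-temperature regime required by \Cref{thm:GD-correctness}, and hence the output is within the prescribed total variation distance. So the plan is essentially a parameter-chasing argument: take $\epsilon' := (2\tFD)^{-1}\epsilon$ as the error parameter passed to $\parGD$, take $n = |V| \geq N_0$, and the internal parameters $\tGD,\tPA$ set in \Cref{tabel:parameters}; then verify that the hypotheses of \Cref{thm:GD-correctness} hold, namely (i) $n \geq 3$, (ii) $\epsilon' \leq \tfrac12$, (iii) the contraction condition $(1 - p^\star_{\min}) \log n \leq \min\{\mathrm{e}^{-40}\exp(-5\log(1/\epsilon')/\log n), (1-\lambda_{\max})/27\}$, and (iv) the lower bounds $\tGD \geq \lceil 2m(\log m + \log(4/\epsilon'))\rceil$ and $\tPA \geq \lceil 3\log(4\tGD/\epsilon')\rceil$.

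First I would handle the easy items. Item (i) is immediate since $N_0 \geq 3$ for $n$ large (or one can just note $n \geq N_0 \geq \ldots$). Item (ii) follows because $\tFD \geq 1$ and $\epsilon < 1$, so $\epsilon' = \epsilon/(2\tFD) \leq 1/2$. For items (iv), I would compare the parameters in \Cref{tabel:parameters} against the requirements with $\epsilon$ replaced by $\epsilon'$: since $\tGD = \lceil 2m(\log m + \log(8\tFD/\epsilon))\rceil = \lceil 2m(\log m + \log(4/\epsilon'))\rceil$, this matches exactly; and $\tPA = \lceil 3\log(8\tGD\tFD/\epsilon)\rceil \geq \lceil 3\log(4\tGD/\epsilon')\rceil$ since $8\tGD\tFD/\epsilon = 2\tFD \cdot 4\tGD/\epsilon \geq 4\tGD/\epsilon'$ (using $2\tFD \geq 1$, indeed $\geq 4\tGD/\epsilon'$ because $2\tFD/\epsilon' \cdot \epsilon' = \ldots$ — more simply $8\tGD\tFD/\epsilon = 4\tGD \cdot (2\tFD/\epsilon) = 4\tGD/\epsilon'$, so these agree). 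These bookkeeping steps are routine.

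The substantive step is item (iii), the contraction inequality for $\*p^\star = \frac{\*p}{\*p + \theta(1-\*p)}$. Here I would first translate to the relevant quantity: $1 - p^\star_e = \frac{\theta(1-p_e)}{p_e + \theta(1-p_e)} \leq \frac{\theta(1-p_e)}{p_e} \leq \frac{\theta}{p_{\min}}$ (using $1-p_e < 1$), so $1 - p^\star_{\min} \leq \theta/p_{\min}$. Then I substitute the value of $\theta$ from \Cref{tabel:parameters}, namely $\theta = \mathrm{e}^{-100}\exp(10\log(\epsilon/2)/\log n)\exp(-140(1-\lambda_{\max})^{-2}) p_{\min}/\log n$, so that $(1-p^\star_{\min})\log n \leq \theta \log n / p_{\min} = \mathrm{e}^{-100}\exp(10\log(\epsilon/2)/\log n)\exp(-140(1-\lambda_{\max})^{-2})$. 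It then remains to check this is bounded by both terms in the minimum. For the second term $(1-\lambda_{\max})/27$: since $\mathrm{e}^{-x} \leq 1/x$ and $\exp(-140(1-\lambda_{\max})^{-2}) \leq \exp(-(1-\lambda_{\max})^{-1})$ is tiny, we have $\mathrm{e}^{-100} \cdot (\text{stuff} \leq 1) \cdot \exp(-140(1-\lambda_{\max})^{-2}) \leq (1-\lambda_{\max})/27$ with room to spare; I would make this precise via $\exp(-140 t^2) \leq \exp(-t^{-1}) \cdot$(something) — actually cleaner: $\exp(-140(1-\lambda_{\max})^{-2}) \leq \tfrac{(1-\lambda_{\max})^2}{140} \leq \tfrac{1-\lambda_{\max}}{27}$ for the relevant range using $\mathrm{e}^{-y}\le 1/y$. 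For the first term $\mathrm{e}^{-40}\exp(-5\log(1/\epsilon')/\log n)$, I would write $\log(1/\epsilon') = \log(2\tFD/\epsilon) = \log(1/(\epsilon/2)) + \log\tFD$, and bound $\log\tFD$ using the explicit formula for $\tFD$: it is $O_\delta(\log\log n + \log(1/\epsilon) + \theta^{-1}\text{-type terms})$, but crucially the exponent $5(1-\lambda_{\max})^{-2}\log(\mathrm{e}/\theta)$ inside $\tFD$ contributes $\log\tFD = O_\delta((1-\lambda_{\max})^{-2}\log(1/\theta) + \log\log n + \log(1/\epsilon))$, and I need $\exp(-5\log(1/\epsilon')/\log n) \geq$ roughly the $\exp(10\log(\epsilon/2)/\log n)$ factor times constants. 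The key design point is that $\theta$ was chosen with the factor $\exp(10\log(\epsilon/2)/\log n) = (\epsilon/2)^{10/\log n}$ and the constant $\mathrm{e}^{-100}$ precisely so that after accounting for the logarithmic blow-up from $\tFD$ (which affects $\epsilon'$ by the factor $2\tFD$), one still beats $\mathrm{e}^{-40}\exp(-5\log(1/\epsilon')/\log n)$. I expect this chain of inequalities — carefully tracking how $\log\tFD$ and the $\theta$-dependence feed into $\log(1/\epsilon')$, and confirming the numeric constants ($100$ vs $40$, $10$ vs $5$, $140$ vs the $(1-\lambda_{\max})$ term) leave enough slack — to be the main obstacle; it is not deep but requires care. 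Once (i)–(iv) are verified, \Cref{thm:GD-correctness} applied with error parameter $\epsilon'$ yields $\DTV{Y}{\mu^{\-{RC}}_{U,\*p^\star,\*\lambda}} \leq \epsilon' = \epsilon/(2\tFD)$, which is exactly the claim.
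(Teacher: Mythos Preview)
Your proposal is correct and follows essentially the same route as the paper: reduce to \Cref{thm:GD-correctness} applied with error parameter $\epsilon' = \epsilon/(2\tFD)$, check the bookkeeping conditions on $n$, $\epsilon'$, $\tGD$, $\tPA$ (which the paper leaves implicit), and then verify the contraction inequality for $\*p^\star$ by bounding $(1-p^\star_{\min})\log n \le \theta\log n/p_{\min}$ and comparing against both branches of the minimum. The paper streamlines your item~(iii) by naming $K := \theta\log n/p_{\min}$, proving the explicit bound $\log\tFD \le 6(1-\lambda_{\max})^{-2}\log(1/K) + 14(1-\lambda_{\max})^{-2}\log n$ (this is where $n\ge N_0 \ge \exp(60(1-\lambda_{\max})^{-2})$ is used), and then showing that the RHS of the first branch is at least $\e^{-50}\exp(5\log(\epsilon/2)/\log n)\exp(-70(1-\lambda_{\max})^{-2})K^{1/2}$, so that $K \le (\text{that quantity})$ reduces exactly to $K^{1/2} \le$ the definition of $K^{1/2}$---which is how the constants $100,10,140$ were reverse-engineered.
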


\Cref{item:field-correctness} of \Cref{thm:field-sim} follows directly by combining \Cref{lemma:correctness} and \Cref{lemma:GD-sim-correctness}. 

\Cref{lemma:correctness} follows from the rapid mixing of the field dynamics.
\begin{lemma}[\text{\cite[Lemma 3.5]{chen2023nearlinear}}]\label{lemma:mixing-field}
  The mixing time of the field dynamics satisfies
  \begin{align*}
    \forall \epsilon \in (0,1), \quad T_{\mathrm{mix}}(\epsilon) \leq \tp{\frac{\mathrm{e}}{\theta}}^{5(1-\lambda_{\max})^{-2}}\tp{2 \log n + \log \log \frac{2}{p_{\min}} + \log \frac{1}{2\epsilon^2}}.
  \end{align*}
\end{lemma}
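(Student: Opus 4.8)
\textbf{Proof proposal for \Cref{lemma:mixing-field} (mixing of the field dynamics).}

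The plan is to invoke the entropy/variance decay machinery developed for the field dynamics of the random-cluster model, exactly as in \cite{chen2023nearlinear}, rather than re-deriving it. First I would recall that the field dynamics $P^{\mathrm{FD}}_\theta$ for $\mu = \mu^{\mathrm{RC}}_{E,\*p,\*\lambda}$ is reversible with stationary distribution $\mu$, so its mixing time is controlled by a modified log-Sobolev (or spectral gap) constant together with the spread $\log(1/\mu_{\min})$, and for random-cluster $\mu_{\min}$ is bounded below in terms of $p_{\min}$ and $\lambda_{\max}$, which is where the $\log\log\frac{2}{p_{\min}}$ and the $\log\frac{1}{2\epsilon^2}$ terms originate after a standard conversion from relative entropy / $\chi^2$-decay to total-variation distance via Pinsker. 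The core input is the contraction estimate: one shows that a single step of the field dynamics contracts entropy (or variance) by a factor depending on $\theta$ and on the "tilted" random-cluster model $\mu^{\mathrm{RC}}_{S\cup X,\*p^\star,\*\lambda}$ appearing in the second step, whose edge parameters $\*p^\star = \frac{\*p}{\*p+\theta(1-\*p)}$ push the model into the low-temperature (large-$\beta$) regime where spectral independence / local-to-global arguments give a good bound.

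The key steps, in order, would be: (1) verify reversibility of $P^{\mathrm{FD}}_\theta$ w.r.t. $\mu$ and set up the localization-scheme/factorization framework — the noising channel $D$ (drop each edge of the "on" set down to $S\cup X$ at rate governed by $\theta$) and the denoising channel $U$ (resample from $\mu^{\mathrm{RC}}_{S\cup X,\*p^\star,\*\lambda}$), so that $P^{\mathrm{FD}}_\theta = D\,U$; (2) establish approximate tensorization / entropy-contraction for the denoising step, using that the tilted model at parameters $\*p^\star$ satisfies spectral independence with a constant depending only on $\lambda_{\max}$, which feeds the $(1-\lambda_{\max})^{-2}$ exponent; (3) combine with the trivial one-step decay of the noising step to get that one full field-dynamics step contracts relative entropy by a factor $1 - (\theta/\mathrm{e})^{5(1-\lambda_{\max})^{-2}}$ (up to constants), so that $O\big((\mathrm{e}/\theta)^{5(1-\lambda_{\max})^{-2}}\big)$ steps suffice to drive the entropy down by a constant factor; (4) iterate: after $(\mathrm{e}/\theta)^{5(1-\lambda_{\max})^{-2}}\cdot O(\log(\text{initial entropy}/\epsilon^2))$ steps the $\chi^2$ (or KL) distance is below $2\epsilon^2$, and the initial entropy from the worst starting state $E$ is at most $\log(1/\mu_{\min}) = O(m\log\frac{1}{p_{\min}})$, which after taking logs contributes the $2\log n + \log\log\frac{2}{p_{\min}}$ terms (here one uses $m \le n^2$ and the fact that the relevant quantity is $\log$ of the spread); (5) apply Pinsker to convert to total variation, yielding the stated $\log\frac{1}{2\epsilon^2}$ term.

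The main obstacle — and the reason this is cited rather than proved here — is step (2): establishing that the denoising-step distribution $\mu^{\mathrm{RC}}_{S\cup X,\*p^\star,\*\lambda}$, a random-cluster model on a random subgraph, enjoys uniform entropy tensorization with a rate that degrades only polynomially in $(1-\lambda_{\max})^{-1}$ and not in $n$ or $m$. This requires the spectral-independence-for-random-cluster analysis of \cite{chen2023nearlinear} (itself built on the comparison between random-cluster and Ising via Edwards–Sokal, plus coupling arguments at low temperature), and its "local-to-global" amplification. Since \Cref{lemma:mixing-field} is quoted verbatim as \cite[Lemma 3.5]{chen2023nearlinear}, the honest approach in this paper is to cite it directly; a self-contained proof would essentially reproduce a substantial portion of that paper, so I would not attempt it here beyond indicating the structure above.
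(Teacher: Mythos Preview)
Your proposal is correct and matches the paper's treatment exactly: the paper does not prove \Cref{lemma:mixing-field} at all but simply cites it as \cite[Lemma~3.5]{chen2023nearlinear}, and your sketch of the underlying argument (entropy contraction for the field dynamics via the localization framework, with the tilted low-temperature random-cluster model providing the needed tensorization, followed by Pinsker) is an accurate summary of what that reference does.
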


\begin{proof}[Proof of \Cref{lemma:correctness}]
  Without loss of generality, we assume that $\abs{V} > N_0$.
   For $0 \le t \le T^{\mathrm{FD}}$, let $X_t$ be the configuration $X$ after the $t$-th iteration of \Cref{alg:field}, and let $Y_t$ be the 
   configuration $Y$ after the $t$-th iteration of the field dynamics $P^{\mathrm{FD}}_\theta$ starting from the same initial state $X_0=Y_0=E$.
   
   Let
   $\mu=\mu^{\-{RC}}_{E, \*p, \*\lambda}$ for short.
   By triangle inequality and the coupling lemma, it holds that
   \[
     \DTV{X_{T^{\-{FD}}}}{\mu}
     \leq \DTV{X_{T^{\-{FD}}}}{Y_{T^{\-{FD}}}} + \DTV{Y_{T^{\-{FD}}}}{\mu}
     \leq \Pr{X_{T^{\-{FD}}} \neq Y_{T^{\-{FD}}}} + \DTV{Y_{T^{\-{FD}}}}{\mu},
   \]
  for any coupling $(X_t,Y_t)$ of the two processes $X_t$ and $Y_t$.
  And it is obvious that if $X_{T^{\-{FD}}} \neq Y_{T^{\-{FD}}}$, then there must exist $1 \leq i \leq T^{\-{FD}}$ such that $X_i \neq Y_i$ but $X_j = Y_j$ for all $0\le j < i$.
  Hence, it holds that
  \begin{align*}
     \Pr{X_{T^{\-{FD}}} \neq Y_{T^{\-{FD}}}}
     &\le \sum_{i=1}^{T^{\-{FD}}} \Pr{X_i \neq Y_i \text{ and } \forall j < i, X_j = Y_j} 
     \leq \sum_{i=1}^{T^{\-{FD}}} \Pr{X_i \neq Y_i \mid X_{i-1} = Y_{i-1}}.
   \end{align*}
    For any $1 \le i \le T^{\-{FD}}$, conditioned on $X_{i-1} = Y_{i-1}$, we construct a coupling of $X_i$ and $Y_i$:
    \begin{enumerate}
      \item generate a random $S\subseteq E$ by adding each $e \in E$ independently with probability $\theta$;
      \item generate $(X_i, Y_i)$ according to the optimal coupling of $\widehat{\mu}^{\-{RC}}_{U,\*p^\star,\*\lambda}$ and $\mu^{\-{RC}}_{U,\*p^\star,\*\lambda}$, where $U = S \cup X_{i-1}$ and $\widehat{\mu}^{\-{RC}}_{U,\*p^\star,\*\lambda}$ is the distribution of the output of $\ParGD{(V, U)} {\*p^\star}{\*\lambda}{(2T^{\-{FD}})^{-1}\epsilon}$.
    \end{enumerate}
  By the coupling lemma and the assumption of \Cref{lemma:correctness}, it holds that
  \begin{align*}
     \Pr{X_i \neq Y_i \mid X_{i-1} = Y_{i-1}} &\leq \max_{U \subseteq E} \DTV{\widehat{\mu}^{\-{RC}}_{U,\*p^\star,\*\lambda}}{\mu^{\-{RC}}_{U, \*p^\star, \*\lambda}} \le \frac{\epsilon}{2T^{\mathrm{FD}}}.
   \end{align*}
   By \Cref{lemma:mixing-field} and our choice of $T^{\-{FD}}$, we have
   \begin{align*}
   \DTV{X_{T^{\-{FD}}}}{\mu} &\leq T^{\-{FD}} \frac{\epsilon}{2 T^{\-{FD}}} + \frac{\epsilon}{2} = \epsilon. 
   \end{align*}
\end{proof}

Now, it only remains to prove \Cref{lemma:GD-sim-correctness}.

\begin{proof}[Proof of \Cref{lemma:GD-sim-correctness}]
    It is sufficient to verify the condition of \Cref{thm:GD-correctness} with error bound $\frac{\epsilon}{2\tFD}$ on the new instance $\mu^{\-{RC}}_{U,\*p^\star, \*\lambda}$:
    \begin{align} \label{eq:cond-gd-mix}
        (1 - p^\star_{\min}) \log n \leq \min\left\{\-{e}^{-40} \exp\tp{-\frac{5\log (2 T^{\-{FD}}/\epsilon)}{\log n}}, \frac{1 - \lambda_{\max}}{27} \right\}.
    \end{align}
    By our assumption $n \ge N_0 \ge \max \left\{4,\frac{3}{p_{\min}}, {\color{black} \sqrt{\frac{\log \tp{2/\epsilon^2}}{\log n}} } \right\}$, it holds that
    \begin{align} 
        \log T^{\mathrm{FD}} 
        %
        \label{eq:TFD-ub}
        \le 6(1-\lambda_{\max})^{-2} \log \tp{1/K} + 14(1-\lambda_{\max})^{-2} \log n,
    \end{align}
    where for convenience, we denote 
    \begin{align*}
        K := \theta \cdot \frac{\log n}{p_{\min}} = \e^{-100} \exp\tp{\frac{10\log(\epsilon/2)}{\log n}} \exp\tp{-140(1 - \lambda_{\max})^{-2}}.
    \end{align*}
    %
    The first term in the right hand side of \eqref{eq:cond-gd-mix} can be bounded by
    \begin{align*}
        \e^{-40} \exp&\tp{-\frac{5\log (2 T^{\-{FD}}/\epsilon)}{\log n}} 
        = \e^{-40} \exp\tp{-\frac{5\log T^{\-{FD}}}{\log n} + \frac{5\log(\epsilon/2)}{\log n}}  \\
        %
        %
        (\text{by \eqref{eq:TFD-ub}}) \quad
        &\geq \e^{-50} \exp\tp{\frac{5\log(\epsilon/2)}{\log n}} \exp\tp{-70 (1 - \lambda_{\max})^{-2} +\frac{30(1 - \lambda_{\max})^{-2}\log K}{\log n}} \\
        &\geq \e^{-50} \exp\tp{\frac{5\log(\epsilon/2)}{\log n}} \exp\tp{-70 (1 - \lambda_{\max})^{-2}} K^{1/2},
    \end{align*}
    where the last inequality holds since $n \geq N_0 \geq \exp\tp{60 (1 - \lambda_{\max})^{-2}}$.
    
    Note that the $\*p^\star= \frac{\*p}{\*p+\theta (1-\*p)}$ constructed in \Cref{alg:field} satisfies
    \begin{align*}
        p^{\star}_{\min} = \min_{e \in E} p^{\star}_e = \frac{p_{\min}}{p_{\min}+\theta (1-p_{\min})} \ge \frac{\log n}{\log n+K} \ge 1- \frac{K}{\log n}.
    \end{align*}
    Therefore, we have the following two bounds on $(1 - p^\star_{\min}) \log n$:
    \begin{align}
        \nonumber
        \tp{1-p^{\star}_{\min}} \log n \le K &\le \e^{-50}\exp\tp{\frac{5\log(\epsilon/2)}{\log n}}\exp\tp{-70(1-\lambda_{\max})^{-2}} K^{1/2} \\
        \label{eq:p1}
        &\le  \e^{-40} \exp\tp{-\frac{5\log (2 T^{\-{FD}}/\epsilon)}{\log n}}, \\
        \label{eq:p2}
        \tp{1-p^{\star}_{\min}} \log n \le K &\le \exp \tp{-140(1-\lambda_{\max})^{-2}} \le \frac{(1-\lambda_{\max})^2}{140} \le \frac{1-\lambda_{\max}}{27}.
    \end{align}
    Combining ~\eqref{eq:p1} and ~\eqref{eq:p2}, we conclude the proof of \Cref{lemma:GD-sim-correctness}.
\end{proof}

\section{Analysis of restricted Gaussian dynamics}
\label{appendix:gaussian-dynamics}
In this section, we verify \cref{item:gaussian-1}, \cref{item:gaussian-2}, and \cref{item:gaussian-3} stated in \cref{sec:gaussian-dynamics}, completing the overall proof.

\cref{item:gaussian-1} requires an efficient method for approximately sampling in parallel from the Gaussian distribution $\mathcal{N}(x_{i-1},J^{-1})$.
The Langevin process is a well-established technique for Gaussian sampling, and recent work has introduced a parallelization for this approach.

\begin{theorem}[\text{\cite[Corollary 14]{anari2024fast}}] \label{thm:gaussian-sampler}
    Let $\alpha, \beta > 0$ be  constants. Define $\kappa := \beta/\alpha$.
    Let $\pi = \exp(-V)$ be a distribution over $\mathbb{R}^d$ such that $V:\mathbb{R}^d \to \mathbb{R}$ satisfies 
    $$0 \prec \alpha I \preceq \nabla^2 V(x) \preceq \beta I$$ 
    for all $x \in \mathbb{R}^d$.
    Given the minimizer $x^\star$ of $V$ and an error bound $\epsilon_0$, there exists a parallel sampling algorithm $\+A$ that generates  $Y$ such that $\DTV{Y}{\pi} \leq \epsilon_0$ by using $O(\kappa\log \kappa \log^2(d/\epsilon_0^2))$ parallel rounds and at most $7\max\left\{\frac{\kappa d}{\epsilon_0^2}, \kappa^2\right\}$-gradient evaluations per round. 
\end{theorem}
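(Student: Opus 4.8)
Since this statement is quoted verbatim from \cite{anari2024fast}, I only sketch the argument one would carry out to establish it. The plan is to run the continuous overdamped Langevin diffusion long enough for mixing and then to simulate it \emph{in parallel in time} by a Picard (fixed-point) iteration. First I would observe that $\alpha I \preceq \nabla^2 V \preceq \beta I$ makes $\pi \propto \e^{-V}$ $\alpha$-strongly log-concave and $\beta$-smooth, so by the Bakry--\'{E}mery criterion $\pi$ satisfies a log-Sobolev inequality with constant $\alpha$. Starting the diffusion $dX_t = -\nabla V(X_t)\,dt + \sqrt{2}\,dB_t$ from the warm start $X_0 \sim \mathcal{N}(x^\star, \beta^{-1} I)$ — admissible precisely because $x^\star$, the minimizer of $V$, is provided, so the initial divergence to $\pi$ is $\mathrm{poly}(d,\kappa)$ — the diffusion reaches total variation distance $\le \epsilon_0/3$ from $\pi$ by a continuous time $T = O\!\left(\alpha^{-1}\log\tfrac{d}{\epsilon_0^2}\right)$ (Pinsker plus exponential KL decay under the log-Sobolev inequality).

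Second, I would split $[0,T]$ into $M = T/\eta$ consecutive segments of length $\eta = \Theta(\beta^{-1})$, so $M = O\!\left(\kappa\log\tfrac{d}{\epsilon_0^2}\right)$, which is where the factor $\kappa$ enters. On a segment, the diffusion driven by a fixed Brownian path solves the integral equation $Z_t = Z_0 + \sqrt{2}(B_t - B_0) - \int_0^t \nabla V(Z_s)\,ds$, and the map sending a candidate trajectory to the right-hand side is a sup-norm contraction with modulus $O(\eta\beta) < 1$ after tuning constants. Iterating it $R = O\!\left(\log\tfrac{Md}{\epsilon_0}\right)$ times drives the trajectory error below $\epsilon_0/M$; each iteration evaluates $\nabla V$ at $m$ time-grid points of the segment \emph{simultaneously} (one parallel round, $m$ gradient evaluations) and then forms the integral by a parallel prefix sum in $O(\log m)$ depth. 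Choosing $m$ equal to the number of Langevin steps a sequential discretization would need, $m = O(\max\{\kappa d/\epsilon_0^2, \kappa^2\})$, makes the within-segment quadrature bias $\le \epsilon_0/M$ as well; multiplying the $M$ (sequential) segments by the $R\cdot O(\log m)$ depth per segment and balancing the logarithmic factors yields depth $O(\kappa\log\kappa\log^2(d/\epsilon_0^2))$ with the claimed $7\max\{\kappa d/\epsilon_0^2,\kappa^2\}$ gradient evaluations per round.

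Third, I would collect the errors: by the triangle inequality $\DTV{Y}{\pi}$ is at most the continuous mixing error, plus the accumulated per-segment discretization-and-Picard error ($M$ segments, each $O(\epsilon_0/M)$), plus the negligible warm-start error; each piece is $O(\epsilon_0)$, and a final constant rescaling gives exactly $\epsilon_0$. The step I expect to be the main obstacle is the second one: one must show the parallel Picard solver is simultaneously contracting on each segment, accurate enough that its bias composes \emph{additively} over all $M = \tilde O(\kappa)$ segments without amplification — which needs a one-step stability estimate for the discretized chain that is uniform in the starting point, typically obtained from the $\beta$-smoothness of $\nabla V$ and Gr\"onwall-type bounds — and cheap enough that the per-round gradient count stays $O(\max\{\kappa d/\epsilon_0^2,\kappa^2\})$; this forces a delicate joint choice of $\eta$, $m$, and $R$, and is precisely where the quantitative content of \cite{anari2024fast} lies.
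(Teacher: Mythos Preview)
The paper does not prove this statement at all: it is quoted as \cite[Corollary 14]{anari2024fast} and used as a black box in \Cref{appendix:gaussian-dynamics} to verify \cref{item:gaussian-1}. There is therefore no ``paper's own proof'' to compare your proposal against.

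That said, your sketch is a reasonable high-level account of the argument in \cite{anari2024fast}: the strongly log-concave/log-Sobolev mixing bound, the segmentation of $[0,T]$ into $O(\kappa\log(d/\epsilon_0^2))$ intervals of length $\Theta(\beta^{-1})$, and the parallel-in-time Picard iteration on each segment are the right ingredients, and the contraction-per-segment with $O(\log)$ Picard rounds is the mechanism that collapses the depth. You have also correctly identified where the quantitative work lies. For the purposes of the present paper, however, none of this is needed---you may simply cite the result.
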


Simulating \cref{line:gaussian-noising} requires sampling from $\mathcal{N}(0,J^{-1})$. 
By \cref{thm:gaussian-sampler}, we define the potential function $V(x) = \frac{1}{2}x^\top Jx$. 
Under the constraint $0 \prec \frac{\eta}{2} I \preceq J \preceq (1-\frac{\eta}{2}) I$, we set $\kappa = 2/\eta$ and $d = n$.
Evaluating the gradient $\nabla V(x)$ takes $O(\log n)$ parallel rounds on $O(n^2)$ machines.
Thus, the total complexity of the parallelized Langevin algorithm is $O(\kappa\log \kappa \log^2(d/\epsilon_0^2)))\cdot O(\log n)=O_\eta(\log^3\tp{n/\epsilon})$ and each parallel round requires
$7\max\left\{\frac{\kappa d}{\epsilon_0^2}, \kappa^2\right\}\cdot O(n^2) = \widetilde{O}_\eta(n^3/\epsilon_0^2)$ machines.
Thus, \cref{item:gaussian-1} is verified.

Verifying \cref{item:gaussian-2} follows directly by expanding the distribution formula in \cref{line:gaussian-denoising}.
Since the interaction matrix $J$ is symmetric, we have
\begin{align*}
    \mu^{\-{Ising}}_{J, h}(x) \cdot \exp\left(-\frac{1}{2}(y_i-x)^\top J(y_i-x)\right) &=  \exp\left(\frac{1}{2} x^\top Jx + h^\intercal \sigma -\frac{1}{2}(y_i-x)^\top J(y_i-x)\right)
    \\ &=\exp\left(\inner{y_i^\top J+h^\top}{x}- \frac{1}{2}y_i^\top J y_i\right)
\end{align*}
It follows that each dimension in the above distribution is independent.
The cost of computing $y_i^\top J+h^\top$ is $O(\log n)$ parallel rounds on $O(n^2)$ machines.


Verifying \cref{item:gaussian-3} follows from the rapid mixing of the restricted Gaussian dynamics, which has been established recently.

\begin{theorem}[\cite{chen2022localization,chen2024rapid}] \label{thm:proximal-sampler-noising}
    Let $\delta \in (0,1)$ be a constant, and let $P$ be the transition matrix of the restricted Gaussian dynamics.
    If $\norm{J}_2 \leq 1 - \delta$, then the restricted Gaussian dynamics exhibits entropy decay with rate $\delta$, i.e., for all $f: \Omega(\mu)\to \mathbb{R}_{\geq 0}$,
    \begin{align*}
        \Ent[\mu]{P f} \leq (1 - \delta) \Ent[\mu]{f}.
    \end{align*}
Consequently, by Pinsker’s inequality, we obtain
\begin{align*}
        \DTV{X}{\mu} \leq \sqrt{\DKL{X}{\mu}/2} \leq \sqrt{\exp\tp{-\delta T^{\mathrm{RGD}}} \log (1/\mu(x_0))},
    \end{align*}
    where $X$ is the output of \Cref{alg:gaussian} and $x_0$ is the initial state.
\end{theorem}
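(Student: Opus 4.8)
The plan is to split the theorem into two parts: the entropy contraction inequality $\Ent[\mu]{Pf}\le(1-\delta)\Ent[\mu]{f}$, which is the substantive content and which I would obtain from the localization-scheme machinery of \cite{chen2022localization,chen2024rapid}, and the ``consequently'' clause, which is a routine consequence of iterating the contraction together with Pinsker's inequality.

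For the entropy contraction, I would first observe that the restricted Gaussian dynamics in \Cref{alg:gaussian} is exactly the down--up walk associated with the \emph{stochastic localization} scheme for $\mu=\mu^{\-{Ising}}_{J,h}$ with Gaussian channel of covariance $J^{-1}$: the noising step $y\sim\mathcal{N}(x_{i-1},J^{-1})$ is the ``down'' (noising) move, and the denoising step is the ``up'' move, which resamples from the posterior $\mu(\cdot\mid y)\propto\mu(\cdot)\exp(-\tfrac12(y-\cdot)^\top J(y-\cdot))$. A direct computation (the same one already carried out when verifying \cref{item:gaussian-2}) shows this posterior is a product distribution on $\{\pm1\}^V$, which is why the ``up'' move is trivial. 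By the general theory of localization schemes, the entropy-contraction rate of this down--up walk is controlled by how much entropy the Gaussian channel conserves, which reduces to a spectral bound on the tilted interaction matrix; under the hypothesis $\norm{J}_2\le1-\delta$ (equivalently, after the harmless diagonal shift, $0\prec\tfrac{\eta}{2}I\preceq J\preceq(1-\tfrac{\eta}{2})I$) this yields contraction rate $\delta$. I would invoke the entropic version of this statement from \cite{chen2024rapid}, together with \cite{chen2022localization}; no new argument is needed beyond citing it.

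Granting the contraction, the TV bound follows. Let $\nu_t$ be the law of $x_t$ in \Cref{alg:gaussian} and $f_t:=\mathrm d\nu_t/\mathrm d\mu$. Since $\mu$ is stationary and $P$ is reversible, $f_{t+1}=Pf_t$, so iterating the contraction $T^{\mathrm{RGD}}$ times gives $\Ent[\mu]{f_{T^{\mathrm{RGD}}}}\le(1-\delta)^{T^{\mathrm{RGD}}}\Ent[\mu]{f_0}\le\exp(-\delta T^{\mathrm{RGD}})\,\Ent[\mu]{f_0}$. Because $x_0$ is deterministic, $f_0=\mathbb{1}[\,\cdot=x_0\,]/\mu(x_0)$, so $\Ent[\mu]{f_0}=\DKL{\nu_0}{\mu}=\log(1/\mu(x_0))$; likewise $\Ent[\mu]{f_{T^{\mathrm{RGD}}}}=\DKL{\nu_{T^{\mathrm{RGD}}}}{\mu}$. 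Substituting and then applying Pinsker's inequality $\DTV{X}{\mu}\le\sqrt{\DKL{X}{\mu}/2}$ gives $\DTV{X}{\mu}\le\sqrt{\exp(-\delta T^{\mathrm{RGD}})\log(1/\mu(x_0))/2}\le\sqrt{\exp(-\delta T^{\mathrm{RGD}})\log(1/\mu(x_0))}$, as claimed.

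The main obstacle lies entirely in the first part: making precise the reduction from ``entropy conserved by the Gaussian channel'' to the spectral condition on $J$, and checking that the quantitative rate is exactly $\delta$ — this is where $\norm{J}_2\le1-\delta$ enters, and it is the content of the cited works. Everything downstream (the iteration, the computation of $\Ent[\mu]{f_0}$, Pinsker) is elementary. Two minor points to handle with care are the direction and reversibility of $P$ when passing from $\Ent[\mu]{Pf}$ to the decay of $\DKL{\nu_t}{\mu}$, and the fact that $\mu(x_0)>0$ for every $x_0\in\{\pm1\}^V$, so $\log(1/\mu(x_0))$ is finite.
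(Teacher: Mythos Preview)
Your proposal is correct and matches the paper's treatment: the theorem is stated as a cited result from \cite{chen2022localization,chen2024rapid}, with the entropy contraction taken as a black box from the localization-scheme machinery and the ``consequently'' clause obtained exactly as you describe, by iterating the contraction on the relative density and applying Pinsker. The paper does not supply any argument beyond the citation, so your sketch in fact spells out more detail than the paper itself provides.
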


Since $0 \prec \frac{\eta}{2} I \preceq J \preceq (1-\frac{\eta}{2}) I$, it follows that for all $x, y \in \{\pm 1\}^n$, 
$$\frac{\mu(x)}{\mu(y)} \leq \exp\left(\frac{2}{\eta}\right).$$
Thus, for any $x_0 \in \Omega(\mu)$,  we have $1/\mu(x_0) \leq 2^n \cdot \exp(2/\eta)$.
Applying \Cref{thm:proximal-sampler-noising}, we ensure $\DTV{X}{\mu} \leq \epsilon/2$ by choosing
\begin{align*}
    T^{\-{RGD}} = \frac{2}{\eta} \tp{\log\frac{4}{\epsilon^2} + \log\tp{n\log 2 + \frac{2}{\eta}}} = O_\eta\tp{\log (n/\epsilon)}.
\end{align*}
Thus, \cref{item:gaussian-3} is verified.

\end{document}